\documentclass[reprint,aps,pra,amsmath,amssymb,longbibliography,floatfix]{revtex4-2}

\usepackage{graphicx}
\usepackage{hyperref}
\usepackage{physics}
\usepackage{amsthm}
\usepackage{booktabs}
\usepackage{enumitem}

\newtheorem{definition}{Definition}
\newtheorem{theorem}{Theorem}
\newtheorem{proposition}{Proposition}
\newtheorem{lemma}{Lemma}
\newtheorem{corollary}{Corollary}
\newtheorem{conjecture}{Conjecture}
\theoremstyle{remark}
\newtheorem{remark}{Remark}

\newcommand{\pin}{p_{\mathrm{in}}}
\newcommand{\plam}{p_{\lambda}}
\newcommand{\Lset}{\mathcal{L}}
\newcommand{\sL}{s^{L}}

\begin{document}

% \title{Verifying a remote quantum processor with degenerate concordant computations,
% with an application to randomness generation}

\title{A Framework for Classical Verification of a Remote Quantum Processor via Degenerate Concordant Circuits, with Application to Randomness Generation}

\author{Bishal Kumar Das}
\email{bishal.k@qpiai.tech}
\affiliation{QpiAI India Private Limited, Bengaluru, Karnataka, India}

\author{Lakshya Priyadarshi}
% \email{lakshya.priyadarshi@qpiai.tech}
\affiliation{QpiAI India Private Limited, Bengaluru, Karnataka, India}

\date{\today}

\begin{abstract}
A client who rents a cloud quantum processor cannot easily tell whether the
bitstrings it receives came from a quantum device. Certification by random circuit
sampling runs on present hardware, but scoring each returned sample costs the client
time $O(2^n)$. We give a protocol in which the client scores each round in time
$O(n)$. It is built on degenerate concordant ensembles: mixed states that are
classical probability distributions over a product basis, in which several basis
states share the same probability. For such a state, every gate that preserves
concordance factorizes into local basis changes, a relabeling of basis states, and a
block factor that rotates among labels of equal probability. The block factor commutes
with the state, so it cancels from the ensemble while entangling each pure component. We call circuits of this kind degenerate concordant circuits (DCCs). The client
holds a secret parity whose cosets are the degenerate sectors of every
layer, so that it builds the circuit and predicts the ensemble with one inner
product per layer. The secret never leaves the client. Inputs are sent uniformly
at random, and the client attaches to each round, in its own bookkeeping, the
eigenvalue of that input in the hidden state; both verification statistics, an
input--output agreement test and a weighted collision test, then compare the
server's samples with the non-flat output distribution of the concordant
computation. We prove completeness, client efficiency, and soundness in an ideal
model in which the compiled circuit reveals nothing about the secret, and we show
that the secret must stay hidden: a classical server that holds it passes. For the
explicit construction we prove that the single-qubit symmetry test used in efficient
simulations of concordant computation fails on non-product inputs with unbiased
marginals, and we measure that a published local-basis finder halts on
every layer we tested, that the transmitted component saturates the maximal Schmidt
rank, that the circuit is not Clifford, and that matrix-product spoofers are
rejected. No known classical algorithm in the relevant literature recovers the secret
of the explicit construction from the public data. Other choices of blocks and
permutations are possible, and the client may build the circuit from any of them.
Soundness is stated as a conjecture for constructions whose public data carry no
signature of the secret.
\end{abstract}

\maketitle

% ==================================================================
\section{Introduction}\label{sec:intro}
% ==================================================================

A client rents time on a remote quantum computer and receives classical
bitstrings. It has no direct way to know whether a quantum device produced them.
For random number generation this is the whole difficulty, because the value of
the output rests entirely on the producer not having been able to predict it.
Classical pseudo-random generators unravel once the seed is known, and physical
noise sources can be manipulated from outside~\cite{Acin2016}. Quantum random
number generators tie unpredictability to the Born
rule~\cite{HerreroCollantes2017}, but a generator one does not own is only as
good as the argument that it ran.

Checking a quantum computation from the outside is a problem with a large
literature, reviewed in Ref.~\cite{Gheorghiu2019}. Protocols in which the client
prepares or measures single qubits, as in verifiable blind
computation~\cite{FitzsimonsKashefi2017}, give strong guarantees but need a
quantum channel to the client. For a purely classical client three families of
answers exist, each expensive in a different currency.
\emph{Device-independent} generation~\cite{Colbeck2009,Pironio2010} certifies
through violation of a Bell inequality~\cite{Brunner2014} and assumes very little
about the hardware, but needs the measured subsystems to be space-like separated.
On a single chip, gate times are nanoseconds and the light-crossing time between
qubits is picoseconds~\cite{Krantz2019}, so that separation does not exist.
\emph{Cryptographic verification} removes the separation
requirement~\cite{Mahadev2018,BCMVV2018}, and has been refined to reduce the quantum
resources it needs~\cite{KahanamokuMeyer2022}, at the price of depth: small
instances have been implemented on trapped ions~\cite{Zhu2023,Lewis2024}, but at
sizes where the underlying cryptographic problem is hard, evaluating a trapdoor
claw-free function coherently needs circuits deeper than present hardware
supports~\cite{Preskill2018}. \emph{Random circuit sampling} is the task behind the
quantum-advantage experiments~\cite{Arute2019,Zhong2020}; its hardness rests on
complexity-theoretic arguments~\cite{Aaronson2011,Bouland2019,HangleiterEisert2023},
and samples are scored by linear cross-entropy benchmarking~\cite{Boixo2018}.
Certification by random circuit sampling~\cite{AaronsonHung2023} runs on present
hardware and has been demonstrated on a 56-qubit trapped-ion
processor~\cite{Liu2025}, but the client must compute an output probability of a
pseudo-random circuit to score each sample. This exponential cost is identified in
Ref.~\cite{AaronsonHung2023} as the central drawback of the protocol, one that in
practice limits it to about sixty qubits.

Verification has also been where these schemes needed the most care. The linear
cross-entropy score has been reproduced by classical algorithms that exploit its
structure rather than simulate the circuit~\cite{Gao2024}; tensor-network methods
have sampled the original advantage circuits at the reported
fidelity~\cite{PanChenZhang2022}; and noisy random circuits in the
anticoncentration regime admit polynomial-time classical
sampling~\cite{Aharonov2023}. Spoofing the score has been related to a natural
complexity conjecture~\cite{AaronsonGunn2020}, and certifying the output
distribution directly is known to require a number of samples growing
exponentially with the register size~\cite{Hangleiter2019}. The common lesson is
that a verification statistic must be judged by what it cannot see.

A fourth approach plants a secret in the circuit itself, and its history is the
closest precedent for this paper. Circuits of commuting gates, the class IQP, are hard
to sample classically under plausible conjectures, exactly, up to additive error, and
with sparse noise~\cite{BremnerJozsaShepherd2011,BremnerMontanaroShepherd2016,
BremnerMontanaroShepherd2017}. The first protocol of this kind hides a secret binary
vector in such a circuit, so that samples from an honest device are correlated with it
and a client who knows it can check them efficiently~\cite{ShepherdBremner2009}. The
secret was later shown to be extractable from the published circuit alone, by a
classical algorithm efficient for hundreds of qubits, which solved the challenge
instance posted with the protocol~\cite{KahanamokuMeyer2019}. Revised constructions
followed. The protocol was extended so as to avoid that attack~\cite{YungCheng2020},
and the test has been run on a five-qubit nuclear-magnetic-resonance processor, which
passed it, and on cloud superconducting processors, whose fidelity was then too low to
pass~\cite{ChenCheng2021}; the construction has also been generalized to a stabilizer
scheme whose classical security was analyzed in Ref.~\cite{BremnerChengJi2025}.
Secret-extraction attacks effective against both the original and the generalized
scheme, over a wide range of parameters, have since been
developed~\cite{GrossHangleiter2025}. A related line hides
a single large peak in the output distribution of a random-looking
circuit~\cite{AaronsonZhang2024,Deshpande2025,Gharibyan2025}. Whether a secret can be
planted so that classical verification is efficient while extraction stays hard is
still open. A protocol of this kind stands or falls on whether the secret can be
extracted from what the server is given, and this paper is organized around that
question.

We take the secret from a different place: the degeneracy structure of a
concordant computation. A mixed state is concordant if it is a classical
probability distribution over the states of a single product basis, so that it
carries neither entanglement nor quantum discord; a computation is concordant if
its state remains concordant after every gate, although the product basis may
change from one gate to the next~\cite{Eastin2010}. Such computations are natural
candidates for classical simulation, and they have been simulated efficiently in
several settings~\cite{Eastin2010,Cable_2015}. When several basis states carry the
same probability the state is degenerate, and degeneracy introduces freedom in the
product basis used to describe it: rotations inside a degenerate sector are
invisible to the ensemble but nontrivial on its individual components. We exploit
this freedom by giving the client a compact secret description of the degenerate
sectors. The client can therefore construct and predict the concordant ensemble
efficiently, whereas the server receives only the compiled circuit and must infer
the hidden sector structure from it.

This paper makes these ideas precise. Section~\ref{sec:background} introduces
concordant states and concordant computation and reviews what is known about
simulating them classically. Section~\ref{sec:framework} shows how degeneracy splits
every gate into a part that the ensemble registers and a part that only individual
runs feel. Section~\ref{sec:construction} gives the construction. Once the client has
chosen the secret parity, the sectors of every layer follow from a single inner
product, so the client builds the whole circuit in time that grows only linearly with
the number of layers and never stores anything of size $2^n$
(Theorem~\ref{thm:poly}). The gates are chosen so that the circuit is not Clifford and
so that no single compiled gate reveals the sectors. Section~\ref{sec:protocol} gives
the protocol. The client sends uniformly random inputs, attaches to each round the
eigenvalue of its input in the hidden state (Proposition~\ref{prop:weights}), and
checks the returned bitstrings with two tests, an agreement test and a weighted
collision test. Each test costs $O(n)$ per round, and for both the client knows exactly
what an honest device should produce (Propositions~\ref{prop:agree}
and~\ref{prop:collision}).

Section~\ref{sec:soundness} turns to soundness. A server that holds the secret passes
both tests (Proposition~\ref{prop:trapdoor}), so the secret must stay hidden; a server
that learns nothing about it cannot pass the agreement test, except with small
probability (Theorem~\ref{thm:ideal}). We then test the explicit circuit against the
known ways of simulating concordant computations. We prove that the single-qubit
symmetry test used in earlier efficient simulations fails on our input
(Theorem~\ref{thm:symtest}); the local-basis finder of Ref.~\cite{Cable_2015} halts on
every layer we tested; the state of each run reaches maximal entanglement across the
middle of the register on every eight-qubit instance; and truncated matrix-product
simulations are rejected. No known classical algorithm in the relevant
literature~\cite{Eastin2010,Cable_2015} recovers the secret, or the frames and sectors
that would reveal it, from the public data of the explicit circuit. The construction is
one choice among several: other blocks and permutations that keep the client's
record-keeping polynomial are possible, and the client may build the circuit from any
of them. Section~\ref{sec:instances} reports these measurements, on
instances of up to ten qubits, Section~\ref{sec:entropy} the randomness that a passing
run certifies, and Section~\ref{sec:discussion} the number of rounds the protocol needs
and how it might be reduced. Completeness, the client's cost, the properties of both
tests, Theorem~\ref{thm:symtest} and soundness in the ideal model are proved. Soundness
against classical servers running in polynomial time is stated as
Conjecture~\ref{conj:main}, for constructions whose public data carry no signature of
the secret. All security statements concern classical servers.

% ==================================================================
\section{Concordant states and concordant computation}\label{sec:background}
% ==================================================================

Concordant computation is less familiar than the pure-state circuit model, and the
protocol of this paper rests on features of it that are rarely spelled out. This
section introduces it from the beginning, explains why it has been studied, and
reviews what is known about simulating it on a classical computer. Readers familiar
with Refs.~\cite{Eastin2010,Cable_2015} may skip to Sec.~\ref{sec:framework}.
Appendix~\ref{app:notation} collects the symbols used in the rest of the paper.

\paragraph*{Conventions.} The register has $n$ qubits. Its computational-basis
states $\ket x$ are labeled by bit strings $x$ of length $n$, and there are $D=2^n$
of them. Bit strings are added bitwise modulo two, written $\oplus$, that is, treated
as vectors over the two-element field $\mathbb F_2$; $|s|$ is the number of ones in a
string $s$, its Hamming weight. For two strings
$s$ and $x$, $\langle s,x\rangle$ is the parity of the bits of $x$ at the positions
where $s$ has a one. A parity has a direct physical meaning: $(-1)^{\langle
s,x\rangle}$ is the eigenvalue on $\ket x$ of the product $Z^s$ of Pauli $Z$ operators
on those positions, so a parity is the value of a many-body $Z$ correlation, written as
a bit.

\subsection{Concordant states}\label{sec:concordance}

A mixed state can be written as a statistical mixture of pure states in many ways,
and the states in such a mixture need not be orthogonal. A concordant state is one for
which one of these mixtures is a classical probability distribution over the states of
a single product basis.

\begin{definition}[Concordant state]\label{def:conc}
A state $\rho$ on $N$ subsystems is concordant if
\begin{equation}
\rho=\sum_{k_1,\dots,k_N}p(k_1,\dots,k_N)\,
\pi^{(1)}_{k_1}\otimes\cdots\otimes\pi^{(N)}_{k_N},
\label{eq:concdef}
\end{equation}
where for each subsystem $j$ the $\pi^{(j)}_{k_j}$ are orthogonal rank-one projectors
and $p$ is a probability distribution.
\end{definition}

In words, each subsystem $j$ carries its own orthonormal basis, the vectors onto which
the projectors $\pi^{(j)}_{k}$ project, and $\rho$ is diagonal in the product basis
built from these local bases. The label $(k_1,\dots,k_N)$ of a product basis state
behaves as a classical random variable with distribution $p$: measuring every
subsystem in its own basis returns the label with probability $p(k_1,\dots,k_N)$ and,
averaged over the outcomes, leaves the state unchanged. Three two-qubit examples fix
the idea.
\begin{enumerate}[label=(\roman*),leftmargin=*,itemsep=1pt]
\item A product state $\rho_1\otimes\rho_2$ is concordant. Each factor is diagonal in
its own eigenbasis, and $p$ is the product of the two spectra.
\item The state $(\ket{00}\!\bra{00}+\ket{11}\!\bra{11})/2$ is concordant but not a
product. Its qubits behave as two classical bits that always agree, and $p$ does not
factorize.
\item The state $(\ket{00}\!\bra{00}+\ket{{+}{+}}\!\bra{{+}{+}})/2$, with
$\ket{+}=(\ket0+\ket1)/\sqrt2$, is a mixture of product states and therefore
unentangled, but it is not concordant: the two product states are not orthogonal, and
no product basis diagonalizes the mixture.
\end{enumerate}
Example (iii) shows that concordance is stronger than the absence of entanglement. The
difference is quantified by quantum discord~\cite{Ollivier2001,Henderson2001,Modi2012},
which measures how much of the correlation between two parts of a state is necessarily
disturbed when one part is measured locally. A state is concordant exactly when its
discord vanishes across every division of the register into two
parts~\cite{Eastin2010,Modi2012}, so concordant states are in this sense the most
classical of mixed states. They need not be simple, however. The state used in this
paper (Sec.~\ref{sec:hiddenstate}) is of type (ii): diagonal in the computational
basis, with a distribution $p$ that is correlated across many qubits.

One feature of concordant states will be central. The product basis is not always
fixed by the state. If two basis states carry the same probability, any rotation
inside the space they span leaves the state unchanged, just as the maximally mixed
state of a qubit is diagonal in every basis. Such equal probabilities are
\emph{degeneracies}, and Sec.~\ref{sec:framework} shows that they hide information
from anyone who holds only the ensemble.

\subsection{Concordant computation}\label{sec:conccomp}

\begin{definition}[Concordant computation]\label{def:conccomp}
A quantum computation with input $\rho_0$ and gates $G_1,\ldots,G_L$ is concordant if
$\rho_0$ and every intermediate state
\begin{equation*}
\rho_t=G_t\cdots G_1\rho_0G_1^\dagger\cdots G_t^\dagger
\end{equation*}
are concordant~\cite{Eastin2010}.
\end{definition}

The state therefore remains diagonal in some product basis throughout the computation,
but not necessarily in the same one: each gate may rotate the product basis in which
the state is diagonal. This requirement strongly limits what a gate can do. If no two
basis states carry the same probability, the eigenvectors of the state are fixed, and a
gate that keeps the state diagonal in a product basis must carry each state of the old
product basis onto a state of the new one. Seen in these two bases, the gate is a
permutation of labels up to phases, and the whole computation is a reversible
classical computation on labels, dressed in changing local bases. Degeneracy relaxes
this: labels that carry the same probability may be mixed with one another without
changing the state. Section~\ref{sec:framework} makes this precise.

Concordant computation was introduced to answer a question about the resources behind
quantum speedups~\cite{Eastin2010}. In the one-clean-qubit model, a single pure qubit
and many maximally mixed ones suffice to estimate the normalized trace of a unitary, a
task with no known efficient classical algorithm, while the entanglement present stays
small~\cite{Knill1998,Datta2008}. Quantum discord is present in that
computation~\cite{Datta2008}, which suggested discord, rather than entanglement, as the
resource behind mixed-state speedups and raised the converse question: is a
computation that never generates discord easy to simulate
classically~\cite{Eastin2010}? The role of correlations beyond entanglement in
computation has since been examined in detail~\cite{Modi2012,Brodutch2013}. Discord
also has operational meaning outside computation. It is the markup in the cost of
quantum state merging when relevant prior information is
discarded~\cite{MadhokDatta2011PRA}, it quantifies the advantage of quantum coherence
in noisy communication protocols~\cite{MadhokDatta2014TQC,MadhokDatta2013IJMPB,
DattaMadhok2017Lectures}, and its dynamics carry signatures of chaos and
ergodicity~\cite{Madhok2015PRE,Madhok2018OptComm}.

\subsection{Classical simulation of concordant computation}\label{sec:concsim}

Because a concordant state is, in the right basis, only a probability distribution
over labels, one may hope to simulate a concordant computation by following labels:
draw a label from the input distribution and update it gate by gate, as the
computation permutes labels. Two difficulties stand in the way. The simulator must know
the product basis at every step, and it must know which labels are degenerate, because
a gate may mix degenerate labels instead of permuting them. The literature has
resolved these difficulties in progressively more general settings, summarized in
Table~\ref{tab:known}.

The first efficient simulation~\cite{Eastin2010} treats a simplified form of this
structure. The input is a product state diagonal in a known basis, the special case
(i) of Sec.~\ref{sec:concordance} in which $p$ factorizes and every local basis is
given, and the gates act on one or two qubits. In that setting the permutations a gate
can perform are affine maps of the label bits, and deciding which labels are
degenerate reduces to $n+1$ evaluations. Nondegenerate product inputs are tractable
for larger gates as well, because there is then no degeneracy to determine. With
degeneracy and gates on three or more qubits the situation changes: even a
transposition of two labels is then not affine, and testing whether an arbitrary
permutation leaves a degenerate state unchanged embeds 3-SAT~\cite{Eastin2010}. The
conclusion drawn in Ref.~\cite{Eastin2010} is that a conventional computation of one-
and two-qubit gates on a product input must generate discord if it is to solve
efficiently a problem that is hard classically.

Reference~\cite{Cable_2015} reformulated the problem in terms of the local product
basis itself, extended it to larger gates and to qudits, and argued that a larger
class of concordant computations admits efficient simulation. Its local-basis finder
(LBF) propagates a concordant state gate by gate by searching for local rank-one
projectors compatible with the transformed subsystem eigenspaces, the operators
introduced in Sec.~\ref{sec:sectors}. When the candidate projectors determine
compatible local bases, the evolution continues efficiently for gates of fixed size.
When different admissible assignments of degenerate labels imply incompatible local
bases, the procedure cannot select a unique continuation and halts. Whether every
concordant computation can be simulated efficiently is left open~\cite{Cable_2015}.

\begin{table}[t]
\caption{Known results on simulating concordant computation classically, and the
setting of this paper. ``Known input'' means that the simulator is given the input
state.}
\label{tab:known}
\begin{ruledtabular}
\begin{tabular}{p{0.5\columnwidth}p{0.42\columnwidth}}
setting & classical simulation\\
\hline
product input in a known basis; one- and two-qubit gates & efficient~\cite{Eastin2010}\\
nondegenerate product input; larger gates & efficient~\cite{Eastin2010}\\
degenerate input; gates on three or more qubits & degeneracy test for an arbitrary
permutation embeds 3-SAT~\cite{Eastin2010}\\
known input; the LBF fixes the local bases at every gate & efficient for gates of
fixed size~\cite{Cable_2015}\\
known input; incompatible local bases & the LBF halts; general case
open~\cite{Cable_2015}\\
this paper: correlated, degenerate input whose sectors are withheld from the
simulator & the sectors must be inferred from the compiled gates
(Sec.~\ref{sec:soundness})\\
\end{tabular}
\end{ruledtabular}
\end{table}

Both lines of work point to the same structural issue: \emph{degeneracy}. Without
degeneracy the spectrum fixes the eigenvectors, and the product basis and the
permutation can be read off each gate. With degeneracy there is freedom to rotate
inside equal-probability subspaces without changing the mixed state, and resolving
that freedom is the hard part of the simulation.

% ==================================================================
\section{Theoretical framework: degeneracy as a hidden resource}\label{sec:framework}
% ==================================================================

This section makes the role of degeneracy precise. Every gate of a concordant
computation splits into a part that the mixed state registers, a relabeling, and a
part that it does not, a rotation among degenerate labels. The second part still acts
on every individual run. This asymmetry is what the protocol exploits.

\subsection{Frames, supports and sectors}\label{sec:sectors}

We consider a circuit of $L$ layers. Layer $t$ applies a unitary $G_t$ to a small set
$b_t$ of $\ell$ qubits, its \emph{support}, which carries $d_b=2^\ell$ labels; $a$
denotes the remaining qubits, and a label splits into a part $\vec x$ on the support
and a part $v$ on the rest. The input is diagonal in the computational basis,
$\rho_0=\sum_x\pin(x)\ket x\!\bra x$, with a probability distribution $\pin$ over
labels. The \emph{level set} $\Lset(x)$ of a label is the set of labels that have the
same input probability as $x$.

After $t$ layers the state $\rho_t$ is concordant, so it is diagonal in some product
basis. We describe that basis by the product of single-qubit unitaries $U_t$ that
rotates the computational basis onto it, and call $U_t$ the \emph{frame}. The input is
prepared in the computational basis, so $U_0=I$. The state seen from its own frame,
\begin{equation}
\sigma_t=U_t^\dagger\rho_t\,U_t ,
\label{eq:sigmat}
\end{equation}
is diagonal in the computational basis: it is the table of probabilities of the labels
after $t$ layers, with $\sigma_0=\rho_0$.

Degeneracy must be judged relative to the support of the acting gate, because the
gate touches only its support while the rest of the register may still tell labels
apart. Fix the label $v$ of the qubits outside the support. What remains of the
diagonal state is a list of joint probabilities, one for each support label $\vec x$.
Collecting these for all $v$, the state can be written as $\rho=\sum_{\vec x}\tilde
\rho^{(a)}_{\vec x}\otimes\pi^{(b)}_{\vec x}$, where $\pi^{(b)}_{\vec x}$ projects the
support onto label $\vec x$ and the unnormalized \emph{conditional state}
$\tilde\rho^{(a)}_{\vec x}$ describes the rest of the register given that
label~\cite{Cable_2015}. When the support is the whole register the conditional state
is a single number, the probability of $\vec x$; in general it plays the role that an
eigenvalue plays then. Two support labels are \emph{degenerate} if their conditional
states coincide, so that nothing in the rest of the register distinguishes them.
Grouping degenerate labels gives the unique decomposition
\begin{equation}
\rho=\sum_i\tilde\rho^{(a)}_i\otimes\Pi^{(b)}_i,\qquad
\Pi^{(b)}_i=\sum_{\vec x\in\mathcal S_i}\pi^{(b)}_{\vec x},
\label{eq:sectors}
\end{equation}
whose sets $\mathcal S_i$ we call the degenerate \emph{sectors}; the $\Pi^{(b)}_i$ are
the full-rank subsystem eigenprojectors of Ref.~\cite{Cable_2015}. Within a sector the
support is maximally mixed, so any unitary on the support that acts only inside the
sectors leaves $\rho$ unchanged.

\subsection{Factorization of a concordance-preserving gate}\label{sec:cancellation}

The significance of this freedom is not captured by discord alone.
Reference~\cite{das2026objectivitybasisambiguityclassical} emphasized that for degenerate concordant states the local
product basis needed for a classical description need not be fixed objectively by the
state itself. The missing local-basis information can therefore remain nontrivial even
though the ensemble has zero discord. Degeneracy provides this freedom, but is not in
itself a hardness guarantee. In the present work we use this basis freedom
operationally: the client knows the degenerate sectors from which the relevant basis
information can be propagated, while the server is not given that structure.

\begin{proposition}[Factorization~\cite{Eastin2010,Cable_2015,das2026objectivitybasisambiguityclassical}]\label{prop:factor}
Suppose the state before layer $t$ is diagonal in the frame $U_{t-1}$, and the gate
$G_t$ leaves it diagonal in a frame $U_t$ that agrees with $U_{t-1}$ outside $b_t$.
Then
\begin{equation}
G_t=U_tP_tB_tU_{t-1}^\dagger ,
\label{eq:gate}
\end{equation}
where $P_t$ permutes the labels of the support and the \emph{block factor} $B_t$ acts
inside the sectors of $\sigma_{t-1}$ on $b_t$, so that $[B_t,\sigma_{t-1}]=0$.
\end{proposition}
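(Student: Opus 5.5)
We pass to the frames and define the conjugated gate $W_t = U_t^\dagger G_t U_{t-1}$. Both sides of the claimed identity act the same way off the support, so the task reduces to showing that $W_t$ factorizes as $W_t = P_t B_t$ on $b_t$. Because $U_t$ and $U_{t-1}$ agree outside $b_t$ and $G_t$ acts only on $b_t$, the operator $W_t$ acts as the identity on $a$ and as a unitary on $b_t$ alone. The hypothesis says that $W_t$ carries the diagonal matrix $\sigma_{t-1}$ onto the diagonal matrix $\sigma_t$:
\begin{equation*}
W_t\,\sigma_{t-1}\,W_t^\dagger=\sigma_t ,
\end{equation*}
and both states are diagonal in the computational basis.

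Next we decompose both states into sectors relative to $b_t$, using Eq.~\eqref{eq:sectors}:
\begin{equation*}
\sigma_{t-1}=\sum_i\tilde\rho^{(a)}_i\otimes\Pi^{(b)}_i,\qquad
\sigma_t=\sum_j\tilde\omega^{(a)}_j\otimes Q^{(b)}_j .
\end{equation*}
The projectors $\Pi_i$ and $Q_j$ are diagonal projectors onto sets of support labels, and within each decomposition the conditional states are pairwise distinct. Since $W_t=I_a\otimes w$, conjugation gives $\sigma_t=\sum_i\tilde\rho^{(a)}_i\otimes w\Pi_i w^\dagger$.

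The key step, and the main obstacle, is to use uniqueness of the sector decomposition to match the two sums. A decomposition of the form $\sum_i A_i\otimes R_i$, with orthogonal projectors $R_i$ summing to the identity and distinct $A_i$, is unique. One sees this by noting that $R_i$ is recovered as the projector onto the set of support vectors $\ket\phi$ for which $\bra\phi\rho\ket\phi$ (a partial expectation over $b_t$) is proportional to the fixed operator $A_i$. I would make this precise through the operator-valued map $\phi\mapsto(I\otimes\bra\phi)\rho(I\otimes\ket\phi)$: on the range of $R_i$ it equals $A_i$, and it cannot equal $A_i$ anywhere else, because the $A_j$ are distinct. Uniqueness then gives a bijection $i\mapsto j(i)$ with $\tilde\omega_{j(i)}=\tilde\rho_i$ and $w\Pi_iw^\dagger=Q_{j(i)}$. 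In particular the ranks agree, so $|\mathcal S_i|=|\mathcal S'_{j(i)}|$.

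With the sectors matched, we choose $P_t$ to be any permutation of support labels that maps $\mathcal S_i$ onto $\mathcal S'_{j(i)}$ for each $i$, which is possible because the sizes agree. Then $P_t\Pi_iP_t^\dagger=Q_{j(i)}$. We set $B_t=P_t^\dagger W_t$. It follows that $B_t\Pi_iB_t^\dagger=\Pi_i$ for every $i$, so $B_t$ preserves each sector and is therefore block diagonal with respect to the $\Pi_i$. Since $\sigma_{t-1}$ is a sum of terms $\tilde\rho_i\otimes\Pi_i$, and $B_t$ acts trivially on $a$ and commutes with every $\Pi_i$, we conclude $[B_t,\sigma_{t-1}]=0$. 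Finally, $G_t=U_tW_tU_{t-1}^\dagger=U_tP_tB_tU_{t-1}^\dagger$, which is Eq.~\eqref{eq:gate}.
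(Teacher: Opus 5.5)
Your architecture matches the paper's. You pass to $w=U_t^\dagger G_tU_{t-1}$ on $b_t$, show that $w$ carries each sector of $\sigma_{t-1}$ onto a coordinate subspace of the same size, choose $P_t$ to match, and set $B_t=P_t^\dagger w$.

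The gap is in the step you flagged as the main obstacle. The test you propose for the uniqueness of $\sum_iA_i\otimes R_i$ does not recover $\mathrm{range}(R_i)$. For a unit vector, $(I\otimes\bra{\phi})\rho(I\otimes\ket{\phi})=\sum_j\|R_j\phi\|^2A_j$ is a convex combination of the $A_j$. Distinctness does not stop it from equalling $A_i$ off $\mathrm{range}(R_i)$ once $A_i$ lies in the convex hull of the other $A_j$. For example, take $a$ empty and label probabilities $0.2,0.1,0.3,0.4$ on $00,01,10,11$. All four sectors are singletons with distinct conditional states, yet $\phi=(\ket{01}+\ket{10})/\sqrt2$ returns $0.2=A_{00}$. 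The set you describe is the zero set of a quadratic form, not a subspace, so there is no projector onto it. With ``proportional to'' the test fails even for two sectors: for a product state with nondegenerate support marginal, the conditional states are distinct but all proportional. Your argument is valid only in the two-sector case with ``equals''. That covers the hidden-parity state, but not the general proposition, nor the multi-parity extension.

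The fix is to test linearly rather than quadratically: $\mathrm{range}(R_i)=\{\phi:\ \rho\,(I\otimes\ket{\phi})=A_i\otimes\ket{\phi}\}$. This holds because contracting $\sum_j(A_j-A_i)\otimes R_j\ket{\phi}=0$ with $I\otimes\bra{\phi}R_k$ gives $(A_k-A_i)\|R_k\phi\|^2=0$. Now take a second decomposition $\sum_jB_j\otimes S_j$ and a nonzero $\phi\in\mathrm{range}(S_j)$. The same contraction gives $(A_k-B_j)\|R_k\phi\|^2=0$ for every $k$. Hence $S_j\le R_k$ for the unique $k$ with $A_k=B_j$. By symmetry the two decompositions coincide, which yields your bijection.

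Alternatively, you can avoid uniqueness altogether, as the paper does. Write $\sigma_{t-1}=\sum_v\ket{v}\!\bra{v}\otimes\Lambda_v$ over outside labels $v$. Diagonality of $\sigma_t$ forces each $w\Lambda_vw^\dagger$ to be diagonal. So $w$ maps the joint eigenspaces of $\{\Lambda_v\}$, which are exactly the sectors, onto joint eigenspaces of a commuting diagonal family. Those are coordinate subspaces of the same dimension, and that is all the factorization needs. Your repaired route gives one thing more: the images are the sectors of $\sigma_t$, carrying the same conditional states.
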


Read from right to left, Eq.~\eqref{eq:gate} says what the gate does.
$U_{t-1}^\dagger$ rotates the support from the old frame back to the computational
basis, where the state is a table of label probabilities; $B_t$ rotates inside each
sector, which the table does not register; $P_t$ relabels; and $U_t$ rotates into the
new frame. Only $P_t$ changes the table. We write $C=G_L\cdots G_1$ for the circuit and
$P=P_L\cdots P_1$ for its composite permutation.

\begin{theorem}[Ensemble invariance]\label{thm:invariance}
Under Proposition~\ref{prop:factor},
$\sigma_t=P_tB_t\sigma_{t-1}B_t^\dagger P_t^\dagger=P_t\sigma_{t-1}P_t^\dagger$, so
measuring $C\rho_0C^\dagger$ in the frame $U_L$ gives $y$ with probability
$\pin(P^{-1}(y))$.
\end{theorem}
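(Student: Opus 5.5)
The proof is a direct computation from the factorization of Proposition~\ref{prop:factor} and the definition~\eqref{eq:sigmat} of the state in its own frame. I would first establish the one-step identity for $\sigma_t$, then iterate it over the layers, and finally read off the measurement statistics in the frame $U_L$.

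\emph{One step.} By definition $\rho_t=G_t\rho_{t-1}G_t^\dagger$ and $\rho_{t-1}=U_{t-1}\sigma_{t-1}U_{t-1}^\dagger$. Inserting Eq.~\eqref{eq:gate} gives
\begin{equation*}
\rho_t=U_tP_tB_tU_{t-1}^\dagger U_{t-1}\sigma_{t-1}U_{t-1}^\dagger U_{t-1}B_t^\dagger P_t^\dagger U_t^\dagger .
\end{equation*}
The inner $U_{t-1}^\dagger U_{t-1}$ factors cancel because frames are unitary. Conjugating by $U_t^\dagger$ then yields $\sigma_t=P_tB_t\sigma_{t-1}B_t^\dagger P_t^\dagger$.

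Next I use $[B_t,\sigma_{t-1}]=0$ together with unitarity of $B_t$ to get $B_t\sigma_{t-1}B_t^\dagger=\sigma_{t-1}B_tB_t^\dagger=\sigma_{t-1}$. This gives the second equality, $\sigma_t=P_t\sigma_{t-1}P_t^\dagger$.

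The only point needing care is that $B_t$ and $P_t$ act on the support $b_t$ and implicitly as the identity on the remaining qubits $a$. The commutation relation must therefore be read for the full-register operator $B_t\otimes I_a$. This is justified by the sector decomposition~\eqref{eq:sectors}: $\sigma_{t-1}=\sum_i\tilde\rho^{(a)}_i\otimes\Pi^{(b)}_i$, and a unitary acting only inside the sectors commutes with each $\Pi^{(b)}_i$, hence with $\sigma_{t-1}$. This is already what Proposition~\ref{prop:factor} asserts, so I would simply cite it.

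\emph{Iteration and measurement.} By induction on $t$, starting from $\sigma_0=\rho_0$, I obtain $\sigma_L=P\rho_0P^\dagger$ with $P=P_L\cdots P_1$. Since $\rho_0=\sum_x\pin(x)\ket x\!\bra x$ and $P$ maps basis states to basis states,
\begin{equation*}
P\rho_0P^\dagger=\sum_x\pin(x)\ket{P(x)}\!\bra{P(x)}=\sum_y\pin(P^{-1}(y))\ket y\!\bra y .
\end{equation*}
Here I assume the phases that may accompany a label permutation have been absorbed into $B_t$ or $U_t$. Even if they are not, they cancel under conjugation of a diagonal state.

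Measuring $C\rho_0C^\dagger=\rho_L$ in the frame $U_L$ means measuring the projectors $U_L\ket y\!\bra y U_L^\dagger$. The outcome probability is $\bra y U_L^\dagger\rho_LU_L\ket y=\bra y\sigma_L\ket y=\pin(P^{-1}(y))$.

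I do not expect a real obstacle here. The only thing to be careful about is that the hypotheses of Proposition~\ref{prop:factor} hold at every layer, so that the induction is legitimate. This is exactly the standing assumption ``under Proposition~\ref{prop:factor}''.
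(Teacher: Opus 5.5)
Your proposal is correct and follows essentially the same route as the paper, which reads $U_t^\dagger G_tU_{t-1}=P_tB_t$ off Proposition~\ref{prop:factor}, drops the block using $[B_t,\sigma_{t-1}]=0$, and iterates from $\sigma_0=\rho_0$ to $\sigma_L=\sum_x\pin(x)\ket{P(x)}\!\bra{P(x)}$. Your check via Eq.~\eqref{eq:sectors} that $B_t\otimes I_a$ commutes with $\sigma_{t-1}$, and your final step $\bra y U_L^\dagger\rho_LU_L\ket y=\bra y\sigma_L\ket y$, spell out what the paper leaves implicit.
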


The block factor drops out for the same reason that rotating a maximally mixed qubit
does nothing: inside a sector the support is maximally mixed. The ensemble therefore
evolves as a classical relabeling, and the final distribution is the input
distribution carried through the composite permutation.

\begin{corollary}[Branch confinement]\label{cor:confine}
For every label $x$, $U_L^\dagger C\ket x$ is supported on the level set
$P(\Lset(x))$.
\end{corollary}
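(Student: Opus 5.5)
We argue by induction on the number of layers, tracking the pure component in the frame of the current layer. Set $\ket{\psi_t}=U_t^\dagger G_t\cdots G_1\ket x$, so that $\ket{\psi_0}=\ket x$ because $U_0=I$. By Proposition~\ref{prop:factor}, $U_t^\dagger G_t U_{t-1}=P_tB_t$, hence $\ket{\psi_t}=P_tB_t\ket{\psi_{t-1}}$. The inductive claim is that $\ket{\psi_t}$ is supported on the set $\mathcal E_t$ of labels $y$ with $\bra y\sigma_t\ket y=\pin(x)$, and that $\mathcal E_t=P_t\cdots P_1(\Lset(x))$. For $t=L$, $P_L\cdots P_1=P$, so this is exactly the corollary.

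\textbf{Step 1: the eigenspace is invariant.} Since $[B_t,\sigma_{t-1}]=0$, the operator $B_t$ preserves every eigenspace of $\sigma_{t-1}$. In particular it preserves the span of the labels in $\mathcal E_{t-1}$, which is the eigenspace of $\sigma_{t-1}$ with eigenvalue $\pin(x)$; here we use that $\sigma_{t-1}$ is diagonal. The permutation $P_t$ sends a label $y$ to $P_t(y)$. By Theorem~\ref{thm:invariance}, $\sigma_t=P_t\sigma_{t-1}P_t^\dagger$, so $\bra{P_t y}\sigma_t\ket{P_t y}=\bra y\sigma_{t-1}\ket y$. Therefore $P_t$ maps the $\pin(x)$-eigenspace of $\sigma_{t-1}$ onto that of $\sigma_t$, and $\mathcal E_t=P_t(\mathcal E_{t-1})$. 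The base case is immediate: $\sigma_0=\rho_0$, so $\mathcal E_0=\Lset(x)$, which contains $x$.

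\textbf{Step 2: where care is needed.} The main point to check is that a block factor acting on the support $b_t$ really commutes with the full-register $\sigma_{t-1}$. This is what lets us use eigenspaces of the whole register rather than only sectors of the support. Proposition~\ref{prop:factor} states $[B_t,\sigma_{t-1}]=0$ as an operator identity on the register, with $B_t$ implicitly tensored with the identity on $a$, so it can be taken as given. If one prefers to argue from the sector structure of Eq.~\eqref{eq:sectors}, write $B_t$ as a sum of blocks acting inside each sector $\mathcal S_i$. Labels in the same sector with the same outside label $v$ have equal conditional probability. So $B_t$ only mixes labels $(\vec x,v)$ and $(\vec x',v)$ whose diagonal entries in $\sigma_{t-1}$ coincide, and this again gives preservation of level sets.

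\textbf{Step 3: conclusion.} Finally, $U_L^\dagger C\ket x=\ket{\psi_L}$ is supported on $\mathcal E_L=P(\Lset(x))$. Here $P$ is regarded as the composite map on labels, and $P(\Lset(x))$ is read as the level set of $P(x)$ in the output distribution $\pin\circ P^{-1}$.
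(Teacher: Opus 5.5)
Your proof is correct and follows essentially the same route as the paper. You telescope the frames so that $U_L^\dagger C=P_LB_L\cdots P_1B_1$, show that each block factor keeps the component inside the current level set, and let Theorem~\ref{thm:invariance} carry that level set through each $P_t$ by induction on $t$; the only difference is that you obtain level-set preservation from $[B_t,\sigma_{t-1}]=0$ acting on the eigenspaces of the diagonal $\sigma_{t-1}$, whereas the paper argues directly from $B_t$ being block diagonal on the sectors (your Step 2 alternative).
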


Proofs are in Appendix~\ref{app:factor}. A single run starts from one label, not from
the mixture, and the corollary says where it can go: never outside the image of its
own level set. Inside that set it spreads. On a single run the frames cancel between
consecutive layers, $U_L^\dagger C=P_LB_L\cdots P_1B_1$, so that
\begin{equation}
U_L^\dagger C\ket x=\!\!\sum_{y\in P(\Lset(x))}\!\!\alpha_y(x)\ket y .
\label{eq:component}
\end{equation}
Theorem~\ref{thm:invariance} and Eq.~\eqref{eq:component} carry the physics of the
paper. The blocks are absent from the first and present in the second: the
permutations decide \emph{which} labels can carry weight, the blocks decide \emph{how
much} each one gets. A classical description of the ensemble needs only the
permutations, whereas a device that runs the circuit produces individual runs, whose
amplitudes $\alpha_y(x)$ depend on every block. The same distinction between an
ensemble description and the record of an individual run separates the decoherence
account of the quantum-to-classical transition from a trajectory-level
one~\cite{Arora2025}.

Three regimes are worth naming. If no two support labels are degenerate, $B_t$ is
diagonal, the gate is a permutation up to phases in the correct frame, and nothing is
hidden. If some labels are degenerate, $B_t$ is a genuine rotation inside each sector:
it leaves the ensemble untouched while entangling the component. If all support labels
are degenerate, $B_t$ may be any unitary on the support and absorbs the permutation.
The construction stays in the middle regime.

\begin{definition}[Degenerate concordant circuit]\label{def:dcc}
Let $\rho_0$ be a degenerate concordant state. A circuit $C=G_L\cdots G_1$ is a
\emph{degenerate concordant circuit} (DCC) for $\rho_0$ if every intermediate state
$\rho_t$ is concordant and, in the factorization of Proposition~\ref{prop:factor},
no block factor $B_t$ is diagonal, so that every layer mixes labels within a sector.
\end{definition}

By Theorem~\ref{thm:invariance} a DCC acts on $\rho_0$ as a relabeling, while by
Eq.~\eqref{eq:component} it entangles every individual run. A DCC is concordant only
on $\rho_0$: started from a single label, as in every round of the protocol, its
runs are pure entangled states.

\subsection{From hidden degeneracy to efficient verification}
\label{sec:verificationidea}

A simulator faces an inverse problem: given the compiled gates, recover enough of the
factorization to propagate a description of the state. A client that had to solve the
same problem forward would inherit this cost. The client here never diagnoses
degeneracy: it manufactures it and carries an exact description of it forward.

The verification idea is to make this asymmetry deliberate. The client builds the
circuit around a secret description of the degenerate sectors, so it can propagate the
concordant ensemble without solving the inverse problem faced by a
simulator~\cite{Eastin2010,Cable_2015}. In the explicit construction below that
description is a parity $s$: its two cosets specify the sectors at every layer and can
be updated using only polynomial-size classical data.

At the same time, the block factors $B_t$ cancel from the ensemble by
Theorem~\ref{thm:invariance} while remaining active on each individual component, as
shown in Eq.~\eqref{eq:component}. The client can therefore evaluate the ensemble using
a compact classical description even though an individual run experiences the full
block dynamics. This is the asymmetry on which the verification protocol is built;
whether a concrete construction is sound is the conjecture stated in
Sec.~\ref{sec:soundness}.

The hidden sector description must also stay out of the inputs. A classical client has
to tell the server which basis state to prepare in each round; if those inputs were
drawn directly from the concordant state, the transmitted labels would themselves
contain information about its degeneracy structure. The protocol therefore sends
uniformly random inputs and restores the statistics of the hidden state only in the
client's private bookkeeping.

The next section gives an explicit family for which the hidden sector description is a
single parity and can be propagated efficiently.

% ==================================================================
\section{The construction}\label{sec:construction}
% ==================================================================

\subsection{The hidden state}\label{sec:hiddenstate}

The circuits we construct are DCCs for the hidden-parity state of Eq.~\eqref{eq:rho0}; the recipe of Appendix~\ref{app:construction} is one way to build them. The input of the concordant computation is never sent to the server. It is fixed by a
secret string $s$ with at least three ones, $|s|\ge3$, and a bias $\lambda\in(0,1]$:
\begin{equation}
\rho_0(s,\lambda)=\sum_x\pin(x)\ket x\!\bra x=\frac{I+\lambda Z^{s}}{D}.
\label{eq:rho0}
\end{equation}
It is the maximally mixed state plus a single correlation among the $|s|$ qubits
selected by $s$: the product of their $Z$ values is biased towards $+1$, and nothing
else is. The state is diagonal in the computational basis, hence concordant, and its
distribution $\pin$ is correlated across those qubits, as in example (ii) of
Sec.~\ref{sec:concordance}. The value of the correlation on a label, written as a bit,
is the \emph{class} of the label, $c(x)=\langle s,x\rangle$. A label of class $0$ has
probability $(1+\lambda)/D$ and one of class $1$ has probability $(1-\lambda)/D$, so the
level sets of Sec.~\ref{sec:sectors} are the two classes. The client will attach to
each label its probability in units of $1/D$, the \emph{weight}
\begin{equation}
w(x)=D\,\pin(x)=1+\lambda(-1)^{c(x)} ,
\label{eq:weight}
\end{equation}
equal to $1+\lambda$ on class $0$ and $1-\lambda$ on class $1$
(Sec.~\ref{sec:protocol}).

\subsection{A spectrum whose sectors are cosets of a hidden parity}\label{sec:parity}

On a support $b$ we write $s|_b$ for the bits of $s$ on $b$, and $\ker s|_b$ for the
support labels $\vec x$ with $\langle s|_b,\vec x\rangle=0$. The sectors of the hidden
state then take a simple form.

\begin{lemma}[Hidden-parity spectrum]\label{lem:parity}
Let $\sigma_0$ be as in Eq.~\eqref{eq:rho0} with $\lambda\neq0$. (a) On a support $b$
with $s|_b\neq0$, two support labels are degenerate if and only if
$\langle s|_b,\vec x\rangle=\langle s|_b,\vec x'\rangle$: the sectors are the two
cosets of $\ker s|_b$, each of $2^{\ell-1}$ labels. (b) For every nonzero
$\omega\in\mathbb F_2^n$, $\mathbb E_{\pin}[(-1)^{\langle \omega,x\rangle}]=\lambda\,\delta_{\omega,s}$;
in particular every marginal on a set of qubits not containing the support of $s$ is
exactly uniform.
\end{lemma}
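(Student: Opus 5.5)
For part (a) I will compute the conditional states of Eq.~\eqref{eq:sectors} directly. I split $s=(s|_a,s|_b)$ and a label $x=(v,\vec x)$, so that $Z^s=Z^{s|_a}\otimes Z^{s|_b}$ and $(-1)^{\langle s,x\rangle}=(-1)^{\langle s|_a,v\rangle}(-1)^{\langle s|_b,\vec x\rangle}$. Projecting $\rho_0=(I+\lambda Z^s)/D$ onto the support label $\vec x$ then gives the conditional state
\begin{equation*}
\tilde\rho^{(a)}_{\vec x}=\frac{1}{D}\Bigl(I_a+\lambda(-1)^{\langle s|_b,\vec x\rangle}Z^{s|_a}\Bigr).
\end{equation*}
This depends on $\vec x$ only through the bit $\langle s|_b,\vec x\rangle$, so labels with equal parity are degenerate. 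For the converse I take two labels of opposite parity. Their conditional states differ by $2\lambda Z^{s|_a}/D$, which is nonzero because $\lambda\neq0$ and $Z^{s|_a}$ is a nonzero operator; this includes the case $s|_a=0$, where it is a nonzero multiple of the identity. So the sectors are exactly the two cosets of $\ker s|_b$. Because $s|_b\neq0$, the map $\vec x\mapsto\langle s|_b,\vec x\rangle$ is a surjective linear functional on $\mathbb F_2^\ell$, and each coset therefore has $2^{\ell-1}$ elements.

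For part (b) I will write $\pin(x)=(1+\lambda(-1)^{\langle s,x\rangle})/D$ from Eq.~\eqref{eq:weight}. The expectation is then
\begin{equation*}
\mathbb E_{\pin}\bigl[(-1)^{\langle\omega,x\rangle}\bigr]=\frac1D\sum_x(-1)^{\langle\omega,x\rangle}+\frac{\lambda}{D}\sum_x(-1)^{\langle\omega\oplus s,x\rangle}.
\end{equation*}
I evaluate it with the character orthogonality $\sum_x(-1)^{\langle u,x\rangle}=D\,\delta_{u,0}$. The first term vanishes for $\omega\neq0$, and the second equals $\lambda\delta_{\omega,s}$.

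For the marginal claim, let $T$ be a set of qubits that does not contain the support of $s$. The Fourier coefficients of the marginal on $T$ are the values above for the $\omega$ supported in $T$. No such $\omega$ equals $s$, so every nonzero coefficient vanishes, and by Fourier inversion on $\mathbb F_2^{|T|}$ the marginal is uniform.

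The only step needing care is the converse direction of (a), namely the case where $s|_a=0$ (the support of $s$ lies entirely inside $b$). There the conditional "states" are scalars, and they still differ because $\lambda\neq0$. I will state this case explicitly.
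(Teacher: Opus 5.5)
Your proposal is correct and follows the paper's proof essentially step for step. In (a) you show that the conditional state depends on $\vec x$ only through $\langle s|_b,\vec x\rangle$ and that the two values differ when $\lambda\neq0$; in (b) you use the same character-orthogonality computation, then Fourier inversion for the marginal. Your explicit handling of the case $s|_a=0$, and of the coset size $2^{\ell-1}$, fills in details the paper leaves implicit. One small correction: when $s|_a=0$ the conditional states are the multiples $(1\pm\lambda)I_a/D$ of the identity, not literal scalars.
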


The proof is in Appendix~\ref{app:factor}. Part (a) replaces the exponential
comparison of conditional states that the definition of a sector suggests by one
inner product over $\mathbb F_2$, exact. Part (b) says that the hidden state is not a
product: its whole structure is a single many-body $Z$ string of weight $|s|\ge3$, and
every lower-order $Z$ coefficient vanishes. The sectors depend on $\lambda$ only
through $\lambda\neq0$; so do the circuit and the client's cost.

\subsection{Layers}\label{sec:layers}

The client builds each layer from a short record. After $t$ layers, every label $y$ of
the current frame descends from input labels of a single class
(Corollary~\ref{cor:confine}), and that class is again a parity, up to a constant:
\begin{equation}
c_t(y)=\langle s^{(t)},y\rangle\oplus\beta_t .
\label{eq:classt}
\end{equation}
The string $s^{(t)}$, the \emph{form}, and the bit $\beta_t$, the \emph{flip}, start at
$s^{(0)}=s$ and $\beta_0=0$, and they are the client's entire record of the hidden
state; the update below keeps Eq.~\eqref{eq:classt} true at every layer. We write
$\sL=s^{(L)}$ and $\beta=\beta_L$ for their final values and $c_L(y)$ for the class of
an outcome.

The support $b_t$ of $\ell=3$ qubits is drawn so as to straddle the central cut,
subject to $s^{(t-1)}|_{b_t}\neq0$. Its two sectors follow from
Lemma~\ref{lem:parity}(a), four labels each. Then:

\emph{The block factor.} The gate acts on the three qubits of the support, which carry
eight labels, and each sector holds four of them. Listing the four labels of a sector
in increasing order, we read the sector as the basis $\ket{00},\ket{01},\ket{10},
\ket{11}$ of a two-qubit space. These two \emph{sector qubits} are a relabeling, not two
of the physical qubits: the block, like the whole gate, acts on all three qubits of the
support. On each sector the block applies a two-qubit entangler in these coordinates.
Let $|\Phi^+\rangle=(|00\rangle+|11\rangle)/\sqrt{2}$ and define
$$
W_B|jk\rangle=(Z^j\otimes X^k)|\Phi^+\rangle,
\qquad
W=(I\otimes H)W_B .
$$
The columns of $W$ are maximally entangled states of the two sector qubits, and its
sixteen entries all have modulus $1/2$. The block applies $W$ on one sector and
$W\cdot\mathrm{CZ}$ on the other. This is where four-label sectors, and hence
$\ell\ge3$, are needed: with $\ell=2$ a sector holds only two labels, too few for a
two-qubit entangler. Each map is Clifford on its own sector, but the block that chooses between
them according to the class is not: in the sector coordinates it is $(I\otimes W)$
followed by a CZ controlled by the class bit, a CCZ gate up to Clifford gates, and it
maps only three of the six single-qubit Pauli generators on the support to Pauli
operators. The two maps are not equal, but on computational-basis inputs their
corresponding columns differ only by phase,
$(W\,\mathrm{CZ})|jk\rangle=(-1)^{jk}W|jk\rangle$.
Thus they have identical rank-one projectors, a property that matters for
the LBF of Ref.~\cite{Cable_2015} (Sec.~\ref{sec:concsim}). Among its candidate
projectors the LBF tries the
rank-one ones, $\ket{\vec x}\!\bra{\vec x}$, and such a candidate yields a valid reading
exactly when $G_tU_{t-1}\ket{\vec x}$ is a product state. If one branch of the block were trivial,
the columns belonging to that sector would be product states and the others would
not, and the pattern of valid rank-one candidates would display the sector
partition. With identical corresponding column projectors on both sectors and affine
permutations, all $d_b$ columns share the same local-unitary invariants and the
pattern carries no information. Identical entanglers would also achieve this, but
leave the Schmidt rank capped below its maximum; the pair $(W,W\cdot\mathrm{CZ})$
removes the cap. Taking $W$ dense, rather than a bare Bell map with two nonzero entries
per column, makes the LBF halt on every layer we scanned instead of half of them; the
price, quantified in Sec.~\ref{sec:instances}, is that truncated simulations keep more
of the input--output correlation. Table~\ref{tab:design} gives the measurements.

\emph{The permutation} $P_t$ is an invertible affine map $\vec x\mapsto K_t\vec x\oplus u_t$
on the support, with an invertible $\ell\times\ell$ matrix $K_t$ over $\mathbb F_2$ and
an offset $u_t$. It carries each coset of $\ker s^{(t-1)}|_{b_t}$ onto a coset of
$\ker s^{(t)}|_{b_t}$, so the class remains a parity, with
\begin{equation}
s^{(t)}|_{b_t}=\bigl(K_t^{-1}\bigr)^{\mathsf T}s^{(t-1)}|_{b_t},\qquad
\beta_t=\beta_{t-1}\oplus\langle s^{(t)}|_{b_t},u_t\rangle .
\label{eq:update}
\end{equation}
The affine map is drawn uniformly, independently of the secret. About one layer in
seven then leaves the restricted form unchanged, mapping the sector partition onto
itself; such a layer moves no level set, but its block still entangles. What must be
avoided is a circuit made only of such layers, which would leave every level set
invariant, so that a server could answer by returning the label it was asked to
prepare. In none of our instances is the final form equal to the secret.
Affine maps are used for two reasons: Eq.~\eqref{eq:update} is the client's whole
record-keeping, and an affine map treats the pairings inside both sectors alike,
which is what keeps the rank-one pattern uninformative (Table~\ref{tab:design}).

\emph{The local frames} $U_t$ are Haar-random single-qubit unitaries on the support,
and so generic non-Clifford rotations. Being local, they cancel from the spectrum,
leave every Schmidt spectrum invariant, and commute with every bond-dimension
truncation, while making each compiled gate a dense $d_b\times d_b$ unitary.

The compiled gate is $G_t=U_tP_tB_tU_{t-1}^\dagger$; the first is $G_1=U_1P_1B_1$,
since $U_0=I$. The client records
$s^{(t)},\beta_t$ and the frame on the support; it never holds an object of size
$2^n$. Appendix~\ref{app:construction} gives the recipe in full, and
Fig.~\ref{fig:circuit} shows one instance end to end.

\begin{figure*}[t]
\includegraphics[width=\textwidth]{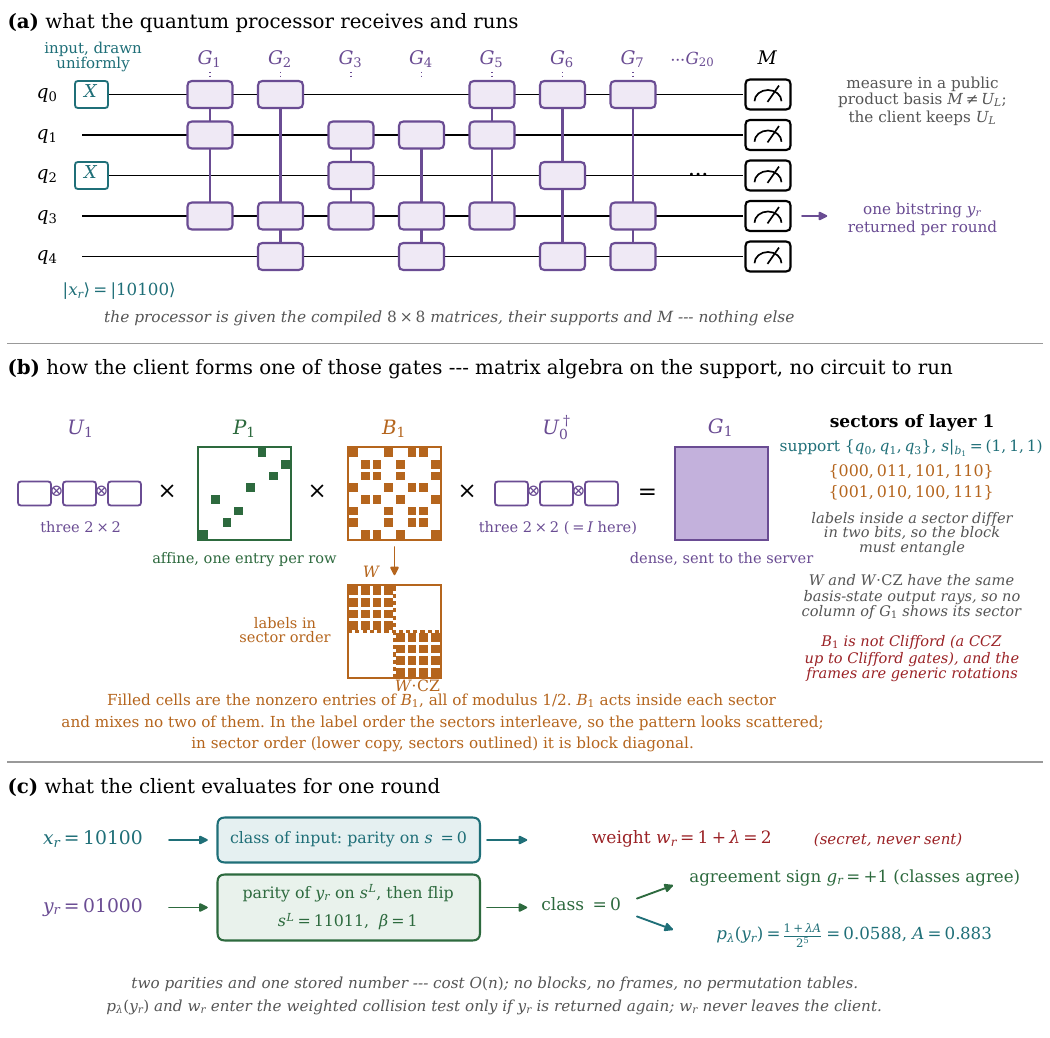}
\caption{One five-qubit instance of the construction, end to end ($\ell=3$, a secret
parity over all five qubits, $L=20$, $\lambda=1$, $\theta=0.15$). Every support,
sector, permutation and number shown is taken from that instance.
\emph{(a)} What the quantum processor receives and runs. The client sends a uniformly
random label, here $\ket{10100}$; the processor prepares it in the computational
basis, applies the
compiled gates $G_t$, each on three qubits straddling the central cut, measures in
the public basis $M\neq U_L$, and returns one bitstring. It is given the compiled
$8\times8$ matrices, their supports and $M$, and nothing else.
\emph{(b)} How the client forms $G_1=U_1P_1B_1U_0^\dagger$ on the support
$\{q_0,q_1,q_3\}$: a product of four $8\times8$ matrices, formed once; here $U_0=I$,
because the input is prepared in the computational basis. $P_1$ is the
affine permutation $(7,4,6,5,3,0,2,1)$ of the support labels; $B_1$ is $W$ on the
sector $\{000,011,101,110\}$ and $W\cdot\mathrm{CZ}$ on $\{001,010,100,111\}$, the
cosets of $s|_{b_1}=(1,1,1)$. Each sector's four labels are read as two sector
qubits, so $W$ and $W\cdot\mathrm{CZ}$ are $4\times4$ two-qubit maps, while $B_1$ and
$G_1$ act on all three physical qubits of the support. Filled cells are nonzero
entries, all of modulus $1/2$; the lower copy lists the labels in sector order, with
the two sectors outlined. $B_1$ is not Clifford, and the frames are generic rotations. Only the dense
product $G_1$ is sent.
\emph{(c)} What the client evaluates for one round. The class of the input fixes the
secret weight $w_r=2$. One parity of the outcome against the final form
$s^{L}=11011$, with flip $\beta=1$, fixes its class, hence the agreement sign
$g_r=+1$ and the target $\plam(y_r)=(1+\lambda A)/2^5=0.0588$, with $A=0.883$. The
outcome shown was sampled from the honest device. Neither the block factors, the
frames nor the permutation tables enter: the blocks cancel from the ensemble
(Theorem~\ref{thm:invariance}) and the permutations are summarized in $s^{L}$.}
\label{fig:circuit}
\end{figure*}

\begin{table}[t]
\caption{Choice of the block factor, at $\ell=3$. ``Leak'': layers, among all $208$
layers of four instances each at $n=6$ and $7$, on which the pattern of valid rank-one
LBF candidates equals the sector partition. ``Schmidt rank'': maximum over all inputs
across the central cut, at $n=8$ for four instances (ceiling 16). $W_B$ is a bare Bell
map and $W$ its dense version. On the first eight layers of three instances each at
$n=6$ and $8$, the LBF run as published halts on $31$, $32$, $24$ and $24$ of the $48$
layers for the four $W_B$ designs, in the order listed, and on all $48$ with $W$.}
\label{tab:design}
\begin{ruledtabular}
\begin{tabular}{lcc}
blocks, permutations & leak & Schmidt rank\\
\hline
$W_B$ and $I$, affine & $120/208$ & 16, 16, 16, 16\\
$W_B$ and $W_B$, non-affine & $149/208$ & 8, 16, 16, 16\\
$W_B$ and $W_B$, affine & $0/208$ & 16, 4, 8, 8\\
$W_B$ and $W_B\mathrm{CZ}$, affine & $0/208$ & 16, 16, 16, 16\\
$W$ and $W\mathrm{CZ}$, affine (used) & $0/208$ & 16, 16, 16, 16\\
\end{tabular}
\end{ruledtabular}
\end{table}

\begin{theorem}[Polynomial client]\label{thm:poly}
For a circuit built by the recipe of Appendix~\ref{app:construction} with $L$ layers
on supports of $\ell$ qubits:
\begin{enumerate}[label=(\roman*),leftmargin=*,itemsep=1pt]
\item the client stores $O(n+L\,d_b)$ numbers and never forms an object of size
$2^n$;
\item it builds the circuit in time $O(L\,[\,n\ell+d_b^{\,3}\,])$;
\item the sectors it uses are the exact ones at every layer, so the circuit satisfies
Theorem~\ref{thm:invariance} exactly;
\item it evaluates the weight, the agreement sign and the target probability of any
round in time $O(n)$;
\item attaching the local frames leaves the spectrum, the Schmidt coefficients of
every component across every bipartition, and every bond-dimension truncation
unchanged.
\end{enumerate}
\end{theorem}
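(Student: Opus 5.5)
The argument is an induction over layers with invariant Eq.~\eqref{eq:classt}: after $t$ layers the class of every label $y$ of the current frame is $\langle s^{(t)},y\rangle\oplus\beta_t$, and the degenerate sectors of $\sigma_t$ on any support $b$ with $s^{(t)}|_b\neq0$ are the two cosets of $\ker s^{(t)}|_b$. Item (iii) is this invariant, and it carries most of the weight; items (i), (ii) and (iv) are accounting on top of it, and (v) is a separate locality argument.

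For (iii) I would prove the stronger statement that $\sigma_t=(I+\lambda(-1)^{\beta_t}Z^{s^{(t)}})/D$ exactly.

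\emph{Base case.} At $t=0$ this is Eq.~\eqref{eq:rho0}.

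\emph{Inductive step.} Assume the form at layer $t-1$, and suppose the support satisfies $s^{(t-1)}|_{b_t}\neq0$. Then Lemma~\ref{lem:parity}(a) applies, with $s$ replaced by $s^{(t-1)}$ and $\lambda$ by $\pm\lambda$. Its proof uses only the Fourier structure of a single $Z$ string, so the substitution is harmless. The sectors the client uses are therefore the true ones.

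\emph{Block factor.} $B_t$ acts inside these cosets by construction, since it is $W$ on one coset and $W\cdot\mathrm{CZ}$ on the other. Hence $[B_t,\sigma_{t-1}]=0$, and Theorem~\ref{thm:invariance} gives $\sigma_t=P_t\sigma_{t-1}P_t^\dagger$.

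\emph{Permutation factor.} Conjugating $Z^{s}$ by the affine label map $\vec x\mapsto K_t\vec x\oplus u_t$ sends $(-1)^{\langle s,\vec x\rangle}$ to $(-1)^{\langle s,K_t^{-1}(\vec y\oplus u_t)\rangle}$. This equals $(-1)^{\langle (K_t^{-1})^{\mathsf T}s,\vec y\rangle\oplus\langle (K_t^{-1})^{\mathsf T}s,u_t\rangle}$, which is exactly Eq.~\eqref{eq:update}. This closes the induction.

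One bookkeeping point needs care: the frame $U_t$ differs from $U_{t-1}$ only on $b_t$, so the computational-basis description of $\sigma_t$ is unaffected by frames. This step, checking that Lemma~\ref{lem:parity}(a) really transfers to the evolved form and that the two-coset block commutes, is where I expect the only real content to lie.

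\emph{Costs, items (i), (ii) and (iv).}
\begin{itemize}
\item Storage: the client holds $s^{(t)}$ and $\beta_t$ ($O(n)$ bits) plus, per layer, the support, $K_t$, $u_t$, the local frames and the compiled $d_b\times d_b$ gate. With $\ell$ fixed this is $O(d_b)$ numbers per layer up to the constant $d_b$. No $2^n$ object ever appears, because the invariant replaces $\sigma_t$.
\item Construction time: choosing a support with $s^{(t-1)}|_{b_t}\neq0$ and updating the form costs $O(n\ell)$, counting a scan of the register for candidates and the $\ell\times\ell$ inversion. Multiplying four $d_b\times d_b$ matrices costs $O(d_b^3)$. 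Summing over the $L$ layers gives the bound in (ii).
\item Per-round evaluation: the weight is $1+\lambda(-1)^{\langle s,x\rangle}$, one inner product. The agreement sign and class of an outcome use $\langle \sL,y\rangle\oplus\beta$, another inner product. The target probability is a closed-form function of these bits and fixed constants such as $A$, which are precomputed once per circuit and are independent of the round. Each of these is $O(n)$.
\end{itemize}

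\emph{Item (v).} Write the compiled circuit as $U_L\,(P_LB_L\cdots P_1B_1)$, using the telescoping stated before Eq.~\eqref{eq:component}. Each component is then a product of single-qubit unitaries applied to the frame-free state $P_LB_L\cdots P_1B_1\ket x$.
\begin{itemize}
\item The spectrum of $\rho_t$ is unitarily invariant.
\item Schmidt coefficients across any bipartition are invariant under local unitaries $U_A\otimes U_B$, since every single-qubit factor lies on one side of the cut.
\item For bond-dimension truncation, an MPS truncation keeps the largest Schmidt values at each bond. Its output for $U\ket\psi$ is $U$ applied to its output for $\ket\psi$, because local unitaries act on the physical legs and commute with the SVD at each bond.
\end{itemize}
This last check is routine once stated for one SVD step and iterated along the chain.
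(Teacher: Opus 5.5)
Your proposal is correct and follows the paper's proof item by item: exact sectors from Lemma~\ref{lem:parity}(a) for (iii), per-layer accounting for (i), (ii) and (iv), and locality of the frames through $U_t^\dagger G_tU_{t-1}=P_tB_t$ for (v). Your induction $\sigma_t=(I+\lambda(-1)^{\beta_t}Z^{s^{(t)}})/D$ usefully makes explicit that the lemma applies to the evolved forms, which the paper uses tacitly. One small repair is needed in (i). To get $O(n+L\,d_b)$ with $d_b$ tracked rather than absorbed into the constant, leave the compiled $d_b\times d_b$ gates out of the client's record (they are sent to the server and never used again). Keep only the permutation tables of $d_b$ entries and the frames, as the paper does.
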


The proof is in Appendix~\ref{app:poly}. Point (ii) is not at odds with the
observation that determining the degeneracy becomes hard once gates act on more than
two qubits~\cite{Eastin2010}: that difficulty belongs to a simulator, which must
infer the degeneracy from the gates. The client chose it.

\begin{remark}[Rejecting readable layers]\label{rem:reject}
The client can run the LBF itself while building, and redraw any layer the LBF reads
without ambiguity. With the dense block factors every layer we scanned was already
ambiguous, so the instances reported here did not need this option; it guarantees
the property for any other choice of blocks.
\end{remark}

% ==================================================================
\section{The protocol}\label{sec:protocol}

The construction above gives the client a compact private description of
the concordant ensemble while the server receives only the compiled gates.
We now turn that asymmetry into an explicit verification procedure.

The protocol runs $R$ rounds, indexed by $r$. In round $r$ the client sends an input
label $x_r$, the server returns an outcome $y_r$, and the client privately records a
weight $w_r$ for the round. For the honest device, $p(y|x)$ is the probability of
outcome $y$ given input $x$. Figure~\ref{fig:protocol} summarizes the exchange.

\begin{figure}[t]
\includegraphics[width=\columnwidth]{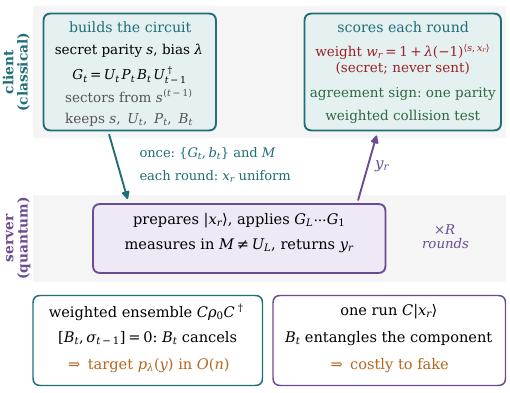}
\caption{The protocol. The client builds the circuit from a secret parity $s$ and
sends the compiled gates once. Each round it sends a uniformly random label $x_r$; the
server prepares $\ket{x_r}$, applies the gates and returns the outcome $y_r$ of a
measurement in $M\neq U_L$. The client keeps, for each round, the weight
$w_r=1+\lambda(-1)^{\langle s,x_r\rangle}$, the eigenvalue of $x_r$ in the hidden
state, and never sends it. Bottom: the weighted ensemble is $C\rho_0C^\dagger$, in
which the block factors cancel, so the target distribution costs $O(n)$ per outcome;
each run is a single component, which the blocks entangle.}
\label{fig:protocol}
\end{figure}

\subsection{Uniform inputs and secret eigenvalue weights}

A classical client must tell the server which basis state to prepare in each round,
so the label $x_r$ is transmitted in the clear. If the client drew $x_r$ from
$\pin$, the labels would be samples of the hidden state, and the class structure
that fixes every sector could be learned from them. The client draws $x_r$ uniformly
instead and applies the eigenvalues itself.

\begin{proposition}[Secret weighting]\label{prop:weights}
Let the server answer input $x$ with outcome $y$ by any strategy, possibly
randomized. For every function $f$ of an input and an outcome,
\begin{equation}
\mathbb E_{x\sim{\rm uniform}}\bigl[w(x)\,f(x,y)\bigr]
=\mathbb E_{x\sim\pin}\bigl[f(x,y)\bigr],
\label{eq:weighting}
\end{equation}
while the labels the server receives are independent of $s$ and $\lambda$.
\end{proposition}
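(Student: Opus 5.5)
The plan is a change of measure followed by an independence argument. The central step is to formalize the server's strategy, possibly randomized, as a conditional distribution $q(y\mid x)$. This distribution may depend on the public data: the compiled gates, their supports and the measurement basis $M$. The key point is that $q$ is fixed once the circuit is sent and does not depend on which input distribution the client uses to draw $x$. Crucially, $q$ is not allowed to see $w(x)$, which the client never transmits. Both sides of Eq.~\eqref{eq:weighting} are then expectations over the same joint law of $(x,y)$, differing only in the law of $x$. Throughout, expectations are conditional on the public data and on $(s,\lambda)$.

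\emph{Step 1: the left-hand side.} I would expand it as $\sum_x D^{-1}\sum_y q(y\mid x)\,w(x)f(x,y)$. By Eq.~\eqref{eq:weight}, $D^{-1}w(x)=\pin(x)$ exactly. Hence the sum equals $\sum_x\pin(x)\sum_y q(y\mid x)f(x,y)$, which is the right-hand side by definition. Since the label space is finite, there are no convergence issues.

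\emph{Step 2: independence of the labels.} The client draws each $x_r$ uniformly and independently of $s$ and $\lambda$, and sends nothing else per round. The transcript of labels is therefore distributed as uniform on $(\mathbb F_2^n)^R$ whatever the values of $s$ and $\lambda$. This statement concerns only the labels. Any information about $s$ that the compiled circuit might carry is a separate matter, addressed in Sec.~\ref{sec:soundness}, and I would say so explicitly.

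\emph{Main obstacle.} The only delicate point is justifying that $q$ is the same under both input laws, i.e.\ that the server cannot adapt to the weighting. This holds because the server's view of round $r$ is the public data plus $x_r$, possibly together with earlier labels and its own private randomness. Under the uniform draw that view has the same law for every secret, and the weights enter only in the client's private bookkeeping afterwards.

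For adaptive servers I would condition on the history of earlier rounds. The history is independent of the current label because labels are drawn i.i.d., so the identity holds round by round and hence also for averages over rounds.
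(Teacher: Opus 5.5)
Your proof is correct and follows the paper's argument, which is just the identity $\pin(x)=2^{-n}w(x)$ from Eq.~\eqref{eq:weight} plus the fact that the client draws the labels uniformly; your round-by-round treatment of adaptive servers goes beyond what the paper states. The ``main obstacle'' paragraph is unnecessary, because the identity holds for any fixed response kernel $q(y\mid x)$, including one that depends on $s$ through the public data; its claim that the server's whole view has the same law for every secret holds for the labels and the private randomness, but not for the compiled gates.
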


The identity is $\pin(x)=2^{-n}w(x)$, Eq.~\eqref{eq:weight}. Physically it states how
mixed states work: the statistics of $\rho_0$ are the eigenvalue-weighted average of
the statistics of its eigencomponents. Each round runs one eigencomponent and the
client supplies its eigenvalue afterwards. The degenerate state, its sectors and the
blocks built on them are all in force; the eigenvalues that define them exist only in
the client's records. At $\lambda=1$ the weights are $0$ and $2$, and the client simply
discards, in secret, the rounds whose input lies in class $1$. The bias $\lambda$ is
therefore a free parameter of the tests rather than a quantity that must be kept small.

\subsection{The target distribution}\label{sec:target}

The device does not measure in the final frame $U_L$, which the client keeps, but in a
public product basis $M=\bigotimes_qM_q\neq U_L$. On qubit $q$ the two differ by the
\emph{tilt} $\Omega_q=U_{L,q}^\dagger M_q$, a single-qubit rotation. Its effect on a
measurement of $Z$ in the final frame is captured by one number,
$a_q=\bra0\Omega_q^\dagger Z\Omega_q\ket0$: the tilted measurement reports the same $Z$
value as an untilted one with probability $(1+a_q)/2$. The class of an outcome is a
parity of the qubits selected by the final form, so the part of it that survives the
tilt is the \emph{visibility}
\begin{equation}
A=\prod_{q:\,\sL_q=1}a_q .
\label{eq:visibility}
\end{equation}
Measuring the weighted ensemble $C\rho_0C^\dagger$ in the basis $M$ gives, by
Theorem~\ref{thm:invariance} and because each $\Omega_q^\dagger Z\Omega_q$ is traceless with
diagonal $(a_q,-a_q)$, the outcome distribution of the hidden state,
\begin{equation}
\plam(y)=\frac1D\Bigl[1+\lambda A\,(-1)^{c_L(y)}\Bigr].
\label{eq:qscore}
\end{equation}
This is one parity and a precomputed number $A$: time $O(n)$ per outcome. Weighting
the outcomes of an honest device by $w(x_r)$ reproduces Eq.~\eqref{eq:qscore} to
$10^{-16}$ in our instances. We take $M_q=U_{L,q}\,e^{-i\theta\,\hat n_q\cdot
\vec\sigma}$, a rotation by the tilt angle $\theta$ about a random unit axis $\hat n_q$,
with $\vec\sigma$ the vector of Pauli matrices. Then $\Omega_q=e^{-i\theta\,\hat n_q\cdot
\vec\sigma}$ turns the Bloch vector of $Z$ by $2\theta$ about $\hat n_q$, and
$a_q=1-2\sin^2\theta\,(1-n_{q,z}^2)$ lies between $\cos2\theta$ and $1$. $A$ is a
product over the $|\sL|\approx n/2$ qubits of the final form, and $\theta\sim1/\sqrt n$
keeps it of order one. Because $M$
is close to $U_L$ the server knows the final frame approximately. This is the mirror
image of the known input basis and, like it, supplies no sectors.

\subsection{The agreement test}\label{sec:agreetest}

The client knows the class of every input it sent. With
\begin{equation}
g_r=(-1)^{\,c(x_r)+c_L(y_r)},\qquad
\hat S=\frac1R\sum_{r=1}^Rg_r ,
\label{eq:agree}
\end{equation}
it accepts if
\begin{equation}
\bigl|\hat S-A\bigr|\le\tau\sqrt{(1-A^2)/R},
\label{eq:agreerule}
\end{equation}
where $\tau$ is a rejection threshold, counted in standard deviations of the honest
statistic.

\begin{proposition}[Agreement]\label{prop:agree}
For the honest device the $g_r$ are independent $\pm1$ variables of mean $A$, for every
$\lambda$. If the outcome is independent of the input, the mean is zero.
\end{proposition}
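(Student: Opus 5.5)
The plan is to compute the mean of $g_r$ as a trace, averaging over the uniform input rather than conditioning on it. Two facts are immediate. Each $g_r$ is a $\pm1$ variable by definition. The rounds are independent and identically distributed, because the inputs $x_r$ are drawn independently and the honest device runs each round afresh on $\ket{x_r}$, measuring in the fixed basis $M$. So it suffices to show $\mathbb E[g_r]=A$ for a single round with $x$ uniform. The weights $w_r$ never enter $g_r$, which is why the result will hold for every $\lambda$.

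\emph{Step 1: a trace formula for the mean.} I would write the parity of the outcome as an observable. Measuring in $M$ and taking $(-1)^{\langle \sL,y\rangle}$ is the expectation of
\begin{equation*}
O=\bigotimes_{q:\,\sL_q=1}M_qZM_q^\dagger .
\end{equation*}
By Eq.~\eqref{eq:classt}, $(-1)^{c_L(y)}=(-1)^{\beta}(-1)^{\langle \sL,y\rangle}$. Averaging over uniform $x$ and using $\sum_x(-1)^{\langle s,x\rangle}\ket x\!\bra x=Z^s$ gives
\begin{equation*}
\mathbb E[g]=\frac{(-1)^\beta}{D}\,\mathrm{Tr}\bigl[C Z^s C^\dagger O\bigr].
\end{equation*}
This avoids any need to control the off-diagonal (in the frame $U_L$) parts of individual components $C\ket x$. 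Those parts do not vanish run by run, and this is the point I expect to need care: conditioned on $x$ the mean is not $A$, and only the average over $x$ is.

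\emph{Step 2: evaluate $CZ^sC^\dagger$ via ensemble invariance.} Since $Z^s=(D\rho_0-I)/\lambda$, Theorem~\ref{thm:invariance} gives $C\rho_0C^\dagger=U_LP\sigma_0P^\dagger U_L^\dagger$. By Corollary~\ref{cor:confine} and the class bookkeeping of Eqs.~\eqref{eq:classt}--\eqref{eq:update}, the relabeled table is
\begin{equation*}
P\sigma_0P^\dagger=\frac{I+\lambda(-1)^\beta Z^{\sL}}{D},
\end{equation*}
because a label $y$ has probability $(1+\lambda(-1)^{c_L(y)})/D$. Hence $CZ^sC^\dagger=(-1)^\beta U_LZ^{\sL}U_L^\dagger$. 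Any $\lambda\neq0$ works here, and the result does not depend on $\lambda$.

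\emph{Step 3: the tilt.} With $M_q=U_{L,q}\Omega_q$, the frames cancel and the mean becomes
\begin{equation*}
\frac1D\prod_{q:\,\sL_q=1}\mathrm{Tr}\bigl[Z\,\Omega_qZ\Omega_q^\dagger\bigr].
\end{equation*}
Each factor equals twice the $\ket0$-diagonal entry of $\Omega_qZ\Omega_q^\dagger$, since this operator is traceless with diagonal $(a_q,-a_q)$. For the rotations used, the $z$-component of the rotated $Z$ axis is the same for $\Omega_q$ and $\Omega_q^\dagger$, so this entry is $a_q$. The factors of $2$ cancel against $1/D$, leaving $\prod_q a_q=A$.

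\emph{Second claim.} If $y$ is independent of $x$, then $\mathbb E[g]=\mathbb E_x[(-1)^{c(x)}]\,\mathbb E_y[(-1)^{c_L(y)}]$. The first factor vanishes because $s\neq0$ and $x$ is uniform (equivalently, Lemma~\ref{lem:parity}(b) at $\lambda=0$). So the mean is zero.
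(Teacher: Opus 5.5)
Your proposal is correct and follows essentially the paper's route: you average over the uniform input to obtain a trace, then use $CZ^sC^\dagger=(-1)^\beta U_LZ^{\sL}U_L^\dagger$, and then factorize over qubits. The paper gets that operator identity from branch confinement, showing $\sum_x(-1)^{c(x)}\ket{\varphi_x}\!\bra{\varphi_x}=(-1)^\beta Z^{\sL}$; you get it equivalently from ensemble invariance with $Z^s=(D\rho_0-I)/\lambda$.

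There are two small slips. First, your displayed product omits the factor $2^{n-|\sL|}$ contributed by the qubits off the form, and that factor is needed for the $2$'s to cancel $1/D$. Second, $\tfrac12\mathrm{Tr}[Z\Omega_qZ\Omega_q^\dagger]=a_q$ holds for any unitary by cyclicity of the trace, so no property of the specific rotations is needed.
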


The proof is in Appendix~\ref{app:agree}. The circuit carries the class operator of the
input onto that of the output, $C\,Z^s\,C^\dagger=(-1)^\beta U_L
Z^{\sL}U_L^\dagger$, and the rotation to $M$ dilutes the correlation to $A$. The test is
two-sided on purpose: a server that returns the correct class in every round has
$\hat S=1$ and is rejected. Reproducing $A$ requires the class map and the final frame
together. A spoofer whose expected agreement is $\bar S$ is rejected after about $\tau^2(1-\bar S^2)/(A-\bar S)^2$ rounds;
a server whose answers ignore the input is rejected after $\tau^2/A^2$.

\subsection{The weighted collision test}\label{sec:collisiontest}

The agreement test reads one Fourier coefficient of the joint distribution of input
and outcome. The second test reads the shape of the outcome distribution. For each
string $y$ returned in the run, let $S_1(y)$ and $S_2(y)$ be the sums of $w_r$ and of
$w_r^2$ over the rounds that returned it, and $\mathcal W=\sum_rw_r$ the total weight. The statistic is
\begin{equation}
Z_w=\sum_y\frac{S_1(y)^2-S_2(y)}{(R-1)\,\plam(y)}-R-2\,(\mathcal W-R).
\label{eq:Zw}
\end{equation}
The numerator $S_1^2-S_2$ is twice the sum, over pairs of different rounds that
returned $y$, of the product of their two weights, so only strings returned at least
twice contribute; each needs $\plam(y)$ once, in time $O(n)$. The last term removes the
fluctuation of the total weight, which has zero mean for every server because the
inputs are uniform.

\begin{proposition}[Weighted collision statistic]\label{prop:collision}
(a) For any server, $\mathbb E[Z_w]=R\,\chi^2(\tilde p\,\|\,\plam)$, where
$\chi^2(p\|q)=\sum_y[p(y)-q(y)]^2/q(y)$ is the chi-square divergence and
$\tilde p(y)=\mathbb E_{x\sim{\rm uniform}}[w(x)\,p_{\rm server}(y|x)]$ is the outcome
distribution the server's strategy would produce on the hidden state. (b) For the
honest device the variance $\sigma_Z^2$ of $Z_w$ is
\begin{equation}
\sigma_Z^2=\frac{2R}{R-1}\Bigl[\frac D2\bigl(F_++F_-\bigr)-1-2\lambda^2\Bigr],
\label{eq:sigma}
\end{equation}
where $F_\pm=(1+\lambda^2\pm2\lambda A)^2/(1\pm\lambda A)^2$.
At $\lambda=0$ this is the unweighted value $2(D-1)R/(R-1)$; at $\lambda=1$ it is
$2(4D-3)R/(R-1)$.
\end{proposition}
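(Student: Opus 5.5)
The plan is to treat $Z_w$ as a U-statistic over the $R$ independent, identically distributed rounds. In each round $x_r$ is uniform, $w_r=w(x_r)$, and $y_r\sim p_{\rm server}(\cdot|x_r)$. Expanding the squares gives
\[
S_1(y)^2-S_2(y)=\sum_{r\neq r'}w_rw_{r'}\,\mathbf 1[y_r=y]\,\mathbf 1[y_{r'}=y],
\]
and therefore
\[
Z_w=\frac1{R-1}\sum_{r\neq r'}h_{rr'}-R-2(\mathcal W-R),\qquad h_{rr'}=\frac{w_rw_{r'}\,\mathbf 1[y_r=y_{r'}]}{\plam(y_r)} .
\]

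\emph{Part (a).} The single fact needed is $\mathbb E[w_r\mathbf 1[y_r=y]]=\tilde p(y)$, which is the definition of $\tilde p$ (equivalently, Proposition~\ref{prop:weights} with $f=\mathbf 1[y_r=y]$). Independence of two distinct rounds then gives $\mathbb E[h_{rr'}]=\sum_y\tilde p(y)^2/\plam(y)$. Next, $\mathbb E[w]=1$ because $\mathbb E_{\rm uniform}[(-1)^{\langle s,x\rangle}]=0$ for $s\neq0$, so $\mathbb E[\mathcal W-R]=0$ for every server. Combining,
\[
\mathbb E[Z_w]=R\Bigl(\sum_y\tilde p^2/\plam-1\Bigr)=R\,\chi^2(\tilde p\|\plam),
\]
using $\sum_y\tilde p(y)=1$.

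\emph{Part (b), Hoeffding decomposition.} For the honest device I first establish the class-conditioned output law. Averaging $p(y|x)$ uniformly over the inputs of class $c$ gives
\[
\frac{1+(-1)^{c+c_L(y)}A}{D}.
\]
This follows by writing the uniform mixture over class $c$ as $(I+(-1)^cZ^s)/D$, propagating $Z^s$ with $CZ^sC^\dagger=(-1)^\beta U_LZ^{\sL}U_L^\dagger$ (the identity used for Proposition~\ref{prop:agree}), and measuring in $M$, where the tilt multiplies the parity by $A$ exactly as in Eq.~\eqref{eq:qscore}. This yields $\tilde p=\plam$, and the conditional projection is
\[
\mathbb E[h_{rr'}\mid r]=w_r\,\plam(y_r)/\plam(y_r)=w_r .
\]
So the first-order Hoeffding part of $\frac1{R-1}\sum_{r\neq r'}h_{rr'}$ is $2\sum_r(w_r-1)$ plus the constant $R$. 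The correction $-R-2(\mathcal W-R)$ cancels it exactly. What remains is the degenerate kernel $\tilde h_{rr'}=h_{rr'}-w_r-w_{r'}+1$, whose pairwise terms are uncorrelated unless the pairs coincide. Hence
\[
\sigma_Z^2=\frac{2R(R-1)}{(R-1)^2}\,\mathbb E[\tilde h^2]=\frac{2R}{R-1}\,\mathbb E[\tilde h^2].
\]

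\emph{Part (b), evaluating $\mathbb E[\tilde h^2]$.} Using $\mathbb E[hw_r]=\mathbb E[w^2]$, $\mathbb E[h]=1$ and $\mathbb E[w^2]=1+\lambda^2$, the expansion reduces to
\[
\mathbb E[\tilde h^2]=\mathbb E[h^2]-2\,\mathbb E[w^2]+1=\mathbb E[h^2]-1-2\lambda^2 .
\]
For the remaining term, $\mathbb E[h^2]=\sum_y\bigl(\mathbb E[w^2\mathbf 1[y_r=y]]\bigr)^2/\plam(y)^2$. By the class-conditioned law, with $\varepsilon=(-1)^{c_L(y)}$,
\[
\mathbb E[w^2\mathbf 1[y]]=\tfrac12\bigl[(1+\lambda)^2(1+\varepsilon A)+(1-\lambda)^2(1-\varepsilon A)\bigr]/D=(1+\lambda^2+2\lambda\varepsilon A)/D .
\]
Dividing by $\plam(y)=(1+\lambda\varepsilon A)/D$ and squaring gives $F_\pm$ for each $y$. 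Half of the $D$ strings have each sign, since $\sL\neq0$, so $\mathbb E[h^2]=\frac D2(F_++F_-)$, which is Eq.~\eqref{eq:sigma}. The two special cases follow by substitution: at $\lambda=0$ both $F_\pm=1$, and at $\lambda=1$ one gets $F_\pm=4$, independent of $A$.

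The main obstacle is the class-conditioned output law. It must hold for each input class separately, not only for the weighted mixture of Theorem~\ref{thm:invariance}. Establishing it rests on the operator identity $CZ^sC^\dagger=(-1)^\beta U_LZ^{\sL}U_L^\dagger$, together with checking that the tilt acts on $Z^{\sL}$ only through the factor $A$ (the traceless, diagonal-$(a_q,-a_q)$ argument). The rest of the proof is bookkeeping with the degenerate U-statistic.
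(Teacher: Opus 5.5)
Your proposal is correct and follows the paper's proof essentially step for step. Part (a) uses the same kernel $h$, independence of distinct rounds and $\mathbb E[w]=1$. Part (b) uses the same Hoeffding decomposition, in which $-R-2(\mathcal W-R)$ cancels the constant and linear parts and leaves a degenerate $U$-statistic with $\mathbb E[h_2^2]=\mathbb E[h^2]-1-2\lambda^2$ and $m_2(y)=(1+\lambda^2+2\lambda A\xi_y)/D$. Your class-conditioned output law is only a repackaging of the paper's marginals, Eq.~\eqref{eq:marginals}, which come from the same identity $CZ^sC^\dagger=(-1)^\beta U_LZ^{\sL}U_L^\dagger$ used for Proposition~\ref{prop:agree}.
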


The proof, in Appendix~\ref{app:collision}, writes $Z_w$ as a degenerate $U$-statistic of
order two. The client rejects if $Z_w>\tau\sigma_Z$. On the instances of
Sec.~\ref{sec:instances}, $400$ simulated honest runs at $\lambda=1$ give a mean of
$-2.2\pm2.2$ and a spread of $43.1$, against $0$ and $45.2$ from
Eq.~\eqref{eq:sigma}, consistent within sampling error. A spoofer is rejected after about
\begin{equation}
R_{\rm agr}\simeq\frac{\tau^2(1-\bar S^2)}{(A-\bar S)^2},\qquad
R_{\rm col}\simeq\frac{\tau\,\sigma_Z}{\chi^2(\tilde p\,\|\,\plam)}
\label{eq:rounds}
\end{equation}
rounds of the respective test. Since $\sigma_Z\simeq\sqrt{8D}$ at $\lambda=1$, $R_{\rm col}$
grows as $\sqrt D$: at $n=50$ a spoofer whose distribution deviates by
$\chi^2(\tilde p\|\plam)=0.3$ reaches the threshold after about $1.6\times10^9$ rounds, and is rejected
reliably after a few times that. This is the main cost of
the protocol. The client's share of it is $O(Rn)$ elementary operations and a table of
the returned strings.

\subsection{What each party holds}

The client accepts if both tests pass. The server receives the \emph{public data}:
the compiled gates with their supports, the basis $M$, and one
uniformly random label per round. The client keeps everything else: the secret $s$
and the bias $\lambda$, the factors $U_t$, $P_t$ and $B_t$, the final frame $U_L$, the
final form $\sL$ with its flip $\beta$, and the weight of every round.

% ==================================================================
\section{Soundness}\label{sec:soundness}
% ==================================================================

\subsection{The secret must stay hidden}

The first question is what a server could do with the secret. The answer fixes what
soundness must mean.

\begin{proposition}[Known sectors make the reading unique]\label{prop:known}
Suppose the sector projectors $\{\Pi_k\}$ of $\sigma_{t-1}$ on $b_t$ and the frame
$U_{t-1}$ on $b_t$ are known. Any product basis in which every
$G_tU_{t-1}\Pi_kU_{t-1}^\dagger G_t^\dagger$ is diagonal reproduces $\sigma_t$. The
remaining freedom, a relabeling inside a sector or a basis change on a qubit outside
the current form, is a symmetry of the state and leaves $\sigma_t$
unchanged~\cite{Cable_2015}.
\end{proposition}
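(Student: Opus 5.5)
The plan is to start from the sector decomposition \eqref{eq:sectors} of $\rho_{t-1}$ and push it through the gate. Write the state before layer $t$ in the computational frame as
\[
\sigma_{t-1}=\sum_k \tilde\sigma^{(a)}_k\otimes\Pi_k ,
\]
so that
\[
\rho_{t-1}=U_{t-1}\sigma_{t-1}U_{t-1}^\dagger=\sum_k U^{(a)}_{t-1}\tilde\sigma^{(a)}_kU^{(a)\dagger}_{t-1}\otimes U_{t-1}\Pi_kU_{t-1}^\dagger .
\]
Because $G_t$ acts only on $b_t$,
\[
\rho_t=\sum_k U^{(a)}_{t-1}\tilde\sigma^{(a)}_kU^{(a)\dagger}_{t-1}\otimes Q_k,\qquad Q_k:=G_tU_{t-1}\Pi_kU_{t-1}^\dagger G_t^\dagger .
\]
The $Q_k$ are mutually orthogonal projectors that sum to the identity on $b_t$. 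The conditional states $\tilde\sigma_k^{(a)}$ are pairwise distinct, since that is the definition of distinct sectors, and they are unchanged because the frame outside $b_t$ is unchanged.

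First step: take any product basis $V$ on $b_t$ in which every $Q_k$ is diagonal. Then each basis vector $V\ket{\vec y}$ lies in exactly one range $Q_k$; call that index $k(\vec y)$. Expanding $Q_k=\sum_{\vec y:\,k(\vec y)=k}V\ket{\vec y}\!\bra{\vec y}V^\dagger$ shows that $\rho_t$ is diagonal in the frame $U_{t-1}^{(a)}\otimes V$. In that frame the conditional state of label $\vec y$ is $\tilde\sigma^{(a)}_{k(\vec y)}$. This is the same table $\sigma_t$ that Theorem~\ref{thm:invariance} gives with the true frame $U_t$. To see this, apply Proposition~\ref{prop:factor}: $Q_k=U_tP_t\Pi_kP_t^\dagger U_t^\dagger$, because $B_t$ commutes with $\Pi_k$. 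So the true reading is one such basis, and both readings assign to each label the conditional state of the sector whose image contains it.

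Second step: pin down the residual freedom. Two admissible readings $V,V'$ can differ only by a relabeling of basis vectors inside a common $Q_k$, so the assignment $\vec y\mapsto\tilde\sigma^{(a)}_{k(\vec y)}$ is permuted only within level sets. Within a sector the support is maximally mixed, so this freedom leaves the table invariant as a distribution up to relabeling within equal-weight labels. That is exactly the symmetry of the state invoked in Ref.~\cite{Cable_2015}. For qubits outside the current form $s^{(t)}$, Lemma~\ref{lem:parity}(b), carried through the affine update \eqref{eq:update}, shows that the state is maximally mixed on those qubits conditional on everything else: the class depends only on bits selected by $s^{(t)}$. Any local basis change there therefore commutes with $\rho_t$.

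The main obstacle is making ``reproduces $\sigma_t$'' precise. A different product basis yields a label table that agrees with $\sigma_t$ only up to a relabeling. I will state it as equality of the ensemble, that is, of $\rho_t$ itself, together with equality of the multiset of conditional states. I will then argue that the relabeling lies inside level sets, so that it is a symmetry. The orthogonality and completeness of the $Q_k$, and the existence of at least one product basis diagonalizing them (namely $U_tP_t$), are immediate from Proposition~\ref{prop:factor}.
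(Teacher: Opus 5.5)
Your first step is correct, and it is more explicit than anything in the paper. The paper gives no separate proof of this proposition: it cites Ref.~\cite{Cable_2015}, remarks that the basis is found by solving for single-qubit projectors that commute with every $Q_k$, and describes the residual freedom only inside the proof of Proposition~\ref{prop:trapdoor}. Your argument works as follows: you push the sector decomposition through the gate, note that each vector of an admissible $V$ lies in exactly one range $Q_k$, and conclude that $\rho_t$ is diagonal in $U^{(a)}_{t-1}\otimes V$, with each label carrying the conditional state of the sector whose image contains it. That is sound. One small correction: the product frame that diagonalizes the $Q_k$ is $U_t$ itself, because $U_t^\dagger Q_kU_t=P_t\Pi_kP_t^\dagger$ is diagonal. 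The matrix $U_tP_t$ only reorders those vectors and is not a product of single-qubit unitaries.

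The gap is in your second step. It is false that two admissible readings differ only by a relabeling inside a common $Q_k$, and the induced relabeling need not lie inside level sets. Let $r=s^{(t)}|_{b_t}$, which is nonzero because $K_t$ is invertible, so that $U_t^\dagger Q_kU_t=\tfrac12(I\pm Z^{r})$. For a qubit $q$ in the support of $r$, the product basis $V'=U_tX_q$ diagonalizes every $Q_k$. Yet it moves every label into the other sector: its table is $X_q\sigma_tX_q=[I-\lambda(-1)^{\beta_t}Z^{s^{(t)}}]/D$, which differs from $\sigma_t$ whenever $\lambda\neq0$. This is therefore not a symmetry, and your planned argument cannot go through.

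The full freedom follows from the product structure. Write $V=U_tW$ with $W=\bigotimes_qW_q$. Diagonality of $W^\dagger Z^{r}W=\bigotimes_{q\in r}W_q^\dagger ZW_q$ forces $W_q^\dagger ZW_q=\pm Z$ for $q\in r$, so on those qubits $W_q$ is a diagonal unitary or a diagonal unitary times $X$. For $q\notin r$, $W_q$ is unrestricted. Three kinds of freedom are genuine symmetries, as you argue: the phases, the rotations on qubits outside the form, and exchanges on an even number of form qubits. The last two include the in-sector relabelings. For the rotations outside the form, the clean justification is the explicit form $\sigma_t=[I+\lambda(-1)^{\beta_t}Z^{s^{(t)}}]/D$ from Theorem~\ref{thm:invariance} and Eq.~\eqref{eq:classt}. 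An exchange on an odd number of form qubits is instead a change of gauge: it flips $\beta_t$ together with the labels.

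So read ``reproduces $\sigma_t$'' the way you proposed, and make it sharper. The claim should be that every admissible reading gives the same $\rho_t$ and the same class operator $(-1)^{\beta_t}U_tZ^{s^{(t)}}U_t^\dagger$, once each new label inherits the class of the sector whose image contains it. Then treat the exchange as a gauge rather than trying to prove it is a symmetry; the statement's own list of residual freedoms leaves this case out. This is how the paper actually uses the proposition in the proof of Proposition~\ref{prop:trapdoor}. There the exchange changes the sign of $a_q$ and the corresponding class bit together, so what is recovered is $A(-1)^\beta$.
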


Finding that basis is linear algebra on $d_b\times d_b$ matrices: for each qubit one
solves for the $2\times2$ projectors commuting with every
$G_tU_{t-1}\Pi_kU_{t-1}^\dagger G_t^\dagger$. The image of each sector under the
recovered map gives the permutation, and Eq.~\eqref{eq:update} the next form.

\begin{proposition}[Holding the secret suffices]\label{prop:trapdoor}
A classical algorithm given $s$ in addition to the public data runs in time
$O(L\,2^{3\ell}+Rn)$ and passes both tests: its agreement statistic has exactly the honest
distribution, and its weighted collision statistic has the honest mean and variance.
\end{proposition}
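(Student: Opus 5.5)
The argument has three parts. First, the algorithm reconstructs everything the client uses for verification, layer by layer. Second, it bounds the cost of that reconstruction. Third, it samples from a classical distribution that matches what the two tests measure.

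\emph{Reconstruction.} Knowing $s$, the algorithm holds $s^{(0)}=s$, $\beta_0=0$ and $U_0=I$. At layer $t$ it knows the current form $s^{(t-1)}$ and the frame $U_{t-1}$ on $b_t$. Lemma~\ref{lem:parity}(a), propagated through Eq.~\eqref{eq:classt}, then gives the two sector projectors $\Pi_k$ on $b_t$ as the cosets of $\ker s^{(t-1)}|_{b_t}$. It forms $G_tU_{t-1}\Pi_kU_{t-1}^\dagger G_t^\dagger$, which costs $O(d_b^3)=O(2^{3\ell})$. For each qubit of $b_t$ it solves for the $2\times2$ projectors commuting with both of these operators, a fixed-size linear system after partial traces. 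By Proposition~\ref{prop:known}, any resulting product basis $U_t'$ reproduces $\sigma_t$. The map $U_t'^\dagger G_tU_{t-1}$ sends each sector onto a set of labels, and that image is a coset of a parity. From it the algorithm reads off a form $s'^{(t)}|_{b_t}$ and a flip update, as in Eq.~\eqref{eq:update}.

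Because $U_t'$ may differ from the true $U_t$ by a symmetry of the state, it must be checked that this ambiguity does not change anything the tests see. A relabeling inside a sector leaves the class parity unchanged. A basis change on a qubit outside the form does not enter $c_t$. An $X$-type relabeling on a qubit inside the form, which flips $\beta$, is paired with the corresponding conjugated frame. So the pair (class function $c_t$, frame restricted to qubits in the form) is well defined. This is the main obstacle: showing that the surviving freedom cannot alter $A$ or the sign of $c_L$ relative to the public $M$. The plan is to argue directly that a swap $\ket0\leftrightarrow\ket1$ on a form qubit $q$ changes $c_L$ by $1$ and changes $a_q$ to $-a_q$, since $a_q$ is computed from $\Omega_q'=U_{L,q}'^\dagger M_q$. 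Hence $(-1)^{c_L}\prod a_q$ is invariant. A phase or diagonal freedom changes neither quantity.

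\emph{Cost.} Summing over layers gives $O(L\,2^{3\ell})$. The algorithm obtains $\sL$, $\beta$ and $A$ via Eq.~\eqref{eq:visibility}, up to the invariance just shown.

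\emph{Sampling.} On input $x_r$ it computes $c(x_r)=\langle s,x_r\rangle$, which takes $O(n)$ time. It then draws $y$ uniformly subject to one constraint: with probability $(1+A)/2$ it sets $c_L(y)=c(x_r)$, and otherwise the opposite class. This is a parity constraint on a uniform string, also $O(n)$. The resulting $g_r$ are i.i.d.\ with mean $A$, which is exactly the honest distribution of Proposition~\ref{prop:agree}, so the agreement statistic has the honest law.

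For the collision test, the relevant quantities are the conditional law $p_{\rm sim}(y|x)=\frac{2}{D}\cdot\frac{1+A(-1)^{c(x)+c_L(y)}}{2}$ and its weighted mixture $\tilde p$. The weighted mixture equals $\plam$, so $\chi^2=0$ and by Proposition~\ref{prop:collision}(a) the mean is honest. For the variance, the computation in Appendix~\ref{app:collision} uses only the second and third weighted moments, namely sums of $w^aw^b$ times pair-collision probabilities. These are determined by the joint law of $(c(x),y)$ under uniform $x$, and that law is $y$ uniform within a class with class correlation $A$. Showing that the honest device induces exactly this joint law is the remaining step. It follows by applying Theorem~\ref{thm:invariance} and Corollary~\ref{cor:confine} to each class-restricted mixture $(I\pm Z^s)/D$ separately. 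Each restricted mixture is again a concordant input with the same sectors, so its output measured in $M$ is uniform over labels, reweighted by $(1\pm A(-1)^{c_L(y)})$. The variance therefore matches Eq.~\eqref{eq:sigma}, and the total running time is $O(L\,2^{3\ell}+Rn)$.
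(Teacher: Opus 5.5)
Your proposal is correct and follows the paper's proof essentially step for step. You read each layer with the sectors supplied by $s$ via Proposition~\ref{prop:known}, note that a swap on a form qubit flips $a_q$ and the class bit together so that $A(-1)^{c_L(y)}$ is recovered, and answer with the class-correlated uniform sampler, whose $\tilde p$ and $m_2$ coincide with the honest ones. Your packaging of the last step also buys something extra. By matching the per-round joint law of $(c(x),y)$, which is exactly the content of Eq.~\eqref{eq:marginals}, you show that $Z_w$ has exactly the honest distribution, not just the stated mean and variance.
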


The proof is in Appendix~\ref{app:trapdoor}. Starting from the computational basis and the sectors given
by $s$, Proposition~\ref{prop:known} recovers every layer, hence $\sL$, $\beta$ and
$U_L$, and so $A$. The algorithm then answers $x$ with a uniformly random $y$ whose
class agrees with that of $x$ with probability $(1+A)/2$. Supplied with the true
sectors, the gate-by-gate reading reproduces the output distribution to total
variation $1\times10^{-14}$ (Sec.~\ref{sec:instances}). Soundness is therefore exactly
the statement that the server cannot obtain $s$, or any equivalent description of the
class map together with $A$, from what it is given.

\subsection{Soundness in an ideal model}\label{sec:ideal}

Suppose first that the server's view carried no information about the hidden data.
In the affine construction the class of an input and the class of the outcome it
leads to are tied by a linear relation fixed by the composite permutation. Let
$\kappa$ be the largest bias, under the client's sampling, of any nonzero linear
function of the pair $(s,\sL)$; it is of order $2^{-n}$ when $s$ is uniform and the
composite permutation mixes well (Appendix~\ref{app:ideal}).

\begin{theorem}[Soundness in the ideal model]\label{thm:ideal}
If the server's view is independent of the hidden data ($s$, the permutations and
$U_L$), then for every strategy
\begin{equation*}
\mathbb E\bigl[\hat S^2\bigr]\le\frac1R+2^{-n}+\kappa ,
\end{equation*}
and the agreement test accepts with probability at most
$(1/R+2^{-n}+\kappa)/(A-\delta_R)^2$ whenever $A>\delta_R=\tau\sqrt{(1-A^2)/R}$.
\end{theorem}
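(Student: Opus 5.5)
Expanding $\hat S^2$ is the natural starting point:
\begin{equation*}
\mathbb E[\hat S^2]=\frac1{R^2}\sum_r\mathbb E[g_r^2]+\frac1{R^2}\sum_{r\neq r'}\mathbb E[g_rg_{r'}]\le\frac1R+\max_{r\neq r'}\mathbb E[g_rg_{r'}],
\end{equation*}
using $g_r^2=1$. The task is then to bound each cross term by $2^{-n}+\kappa$. Write $g_r=(-1)^{\langle s,x_r\rangle\oplus\langle \sL,y_r\rangle\oplus\beta}$. Then $g_rg_{r'}=(-1)^{\langle s,x_r\oplus x_{r'}\rangle\oplus\langle\sL,y_r\oplus y_{r'}\rangle}$, and the flip $\beta$ cancels. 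In the ideal model the server's view, consisting of the public gates, $M$ and all transmitted labels, is independent of the hidden data. So I would condition on the whole transcript $(x_1,\dots,x_R,y_1,\dots,y_R)$ together with the server's internal randomness. This freezes $u=x_r\oplus x_{r'}$ and $v=y_r\oplus y_{r'}$, while $(s,\sL)$ keeps its prior distribution, because the inputs are uniform and independent of $s$ (Proposition~\ref{prop:weights}) and the outputs depend only on the view.

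The conditional expectation of the cross term is then $\mathbb E_{s,\sL}[(-1)^{\langle s,u\rangle\oplus\langle\sL,v\rangle}]$, which is the bias of the linear function $(s,\sL)\mapsto\langle s,u\rangle\oplus\langle \sL,v\rangle$. If $(u,v)\neq(0,0)$, this function is nonzero, so its bias is at most $\kappa$ by definition. The exceptional case is $u=v=0$, where the term equals $1$. Since $x_r,x_{r'}$ are independent and uniform and independent of how the server picks $v$, the event $u=0$ has probability exactly $2^{-n}$. Therefore $\mathbb E[g_rg_{r'}]\le 2^{-n}+\kappa$, which gives the first claim.

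For acceptance, note that the rule \eqref{eq:agreerule} forces $|\hat S-A|\le\delta_R$. When $A>\delta_R$ this implies $\hat S\ge A-\delta_R>0$, hence $\hat S^2\ge(A-\delta_R)^2$. Markov's inequality applied to $\hat S^2$ then gives the stated bound.

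The main obstacle is the independence claim in the conditioning step. The outputs $y_r$ depend on the public gates, which in reality are correlated with $s$. The ideal-model hypothesis is exactly what lets us treat $(s,\sL)$ as independent of $(u,v)$, so I will state carefully that independence is used jointly for the whole transcript. A further subtlety is that $A$ depends on $U_L$ and $\sL$. I would treat $A$ as a fixed threshold parameter in the final Markov step, or else bound uniformly over its possible values. The quantitative estimate $\kappa=O(2^{-n})$ is deferred to Appendix~\ref{app:ideal} and is not needed here.
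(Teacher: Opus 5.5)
Your proposal is correct and follows essentially the same route as the paper's proof. Like the paper, you expand $\hat S^2$, note that $\beta$ cancels in each cross term, and condition on the transcript, which in the ideal model is independent of the hidden data; this bounds each cross term by $\kappa$ except on the event $x_r=x_{r'}$, of probability $2^{-n}$, and Markov's inequality on $\hat S^2$ finishes. Your remark that $A$ depends on the hidden data and must be treated as a fixed parameter in the Markov step addresses a point the paper's proof passes over without comment.
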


The proof is in Appendix~\ref{app:ideal}. In the ideal model the agreement test alone
is sound. The collision test is needed for the explicit circuit, because a simulator
that reads the gates can reproduce the agreement correlation without reproducing the
distribution (Sec.~\ref{sec:mps}).

\subsection{The explicit circuit against known simulation methods}

The compiled gates are not independent of the hidden data: they are built from it.
Soundness of the explicit construction is therefore a statement about whether the
hidden data can be extracted from them, the question that decided the fate of the
IQP schemes~\cite{KahanamokuMeyer2019,GrossHangleiter2025}. We take the known routes to simulating
concordant computation in turn.

\paragraph*{Label-tracking simulation.}
As reviewed in Sec.~\ref{sec:concsim}, the efficient algorithms of
Ref.~\cite{Eastin2010} follow labels in settings where the relevant degeneracy
structure can be identified efficiently. Our circuits lie outside those settings
(Table~\ref{tab:known}): they use three-qubit gates on a degenerate, non-product,
full-rank input, and the server is not given that input's degeneracy structure.

\paragraph*{The single-qubit symmetry test.}
One ingredient of the analysis of Ref.~\cite{Cable_2015} is the single-qubit symmetry
criterion: for a factorized input and a Clifford permutation $S$, $S$ is a symmetry
of $\sigma$ if and only if $\mathrm{Tr}[(S\sigma S^\dagger-\sigma)Z_j]=0$ for every
qubit $j$ (Theorem~III.4 of Ref.~\cite{Cable_2015}). The factorization is essential.

\begin{theorem}[The single-qubit test needs a product input]\label{thm:symtest}
Let $\sigma$ be diagonal on $n\ge2$ qubits with $\mathrm{Tr}[\sigma Z_j]=0$ for every
$j$ and $\sigma\neq I/2^n$. Then
\begin{enumerate}[label=(\alph*),leftmargin=*,itemsep=1pt]
\item $\sigma$ is not a product state;
\item the test statistic vanishes for every bit-flip string $X_T$, $T\subseteq[n]$,
each of which is a Clifford permutation, yet at least one $X_T$ is not a symmetry.
\end{enumerate}
For the hidden-parity state $\sigma_0$ with $|s|\ge2$:
\begin{enumerate}[label=(\alph*),leftmargin=*,itemsep=1pt,start=3]
\item $X_T$ is a symmetry if and only if $\langle s,T\rangle=0$, so half of the $2^n$
bit-flip strings are false positives;
\item for an affine permutation $x\mapsto Kx\oplus u$, write $s'=(K^{-1})^{\mathsf T}s$. The
statistic vanishes if and only if $|s'|\neq1$, and $S$ is a symmetry if and only if
$s'=s$ and $\langle s,u\rangle=0$. A fraction $1/[2(2^n-1-n)]$ of the affine
permutations that pass the test are symmetries.
\end{enumerate}
\end{theorem}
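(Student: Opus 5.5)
The whole statement reduces to the Walsh--Fourier expansion of a diagonal state. I would write every diagonal $\sigma$ on $n$ qubits as
\[
\sigma=\frac1{2^n}\sum_{\omega\in\mathbb F_2^n}c_\omega Z^\omega ,\qquad c_0=1 ,
\]
and note that $\mathrm{Tr}[\sigma Z_j]=c_{e_j}$. The hypothesis $\mathrm{Tr}[\sigma Z_j]=0$ for all $j$ then says that every weight-one coefficient vanishes. The hypothesis $\sigma\neq I/2^n$ says that some $c_\omega$ with $\omega\neq0$ is nonzero.

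For (a), I would suppose $\sigma$ is a product. Being diagonal, each factor is diagonal, so $\sigma=\bigotimes_j(I+m_jZ)/2$. Then $c_{e_j}=m_j$, and the hypothesis forces every $m_j=0$, hence $\sigma=I/2^n$, a contradiction.

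For (b), conjugation by the bit-flip string gives $X_TZ^\omega X_T=(-1)^{\langle\omega,T\rangle}Z^\omega$. Hence
\[
X_T\sigma X_T-\sigma=\frac{1}{2^n}\sum_\omega c_\omega\bigl[(-1)^{\langle\omega,T\rangle}-1\bigr]Z^\omega .
\]
Its overlap with $Z_j$ is proportional to $c_{e_j}$, so it vanishes for every $T$. Each $X_T$ is a Pauli string, hence Clifford, and it permutes computational basis states. $X_T$ is a symmetry if and only if $c_\omega=0$ for every $\omega$ with $\langle\omega,T\rangle=1$. I would pick a nonzero $\omega$ with $c_\omega\neq0$ and a $T$ with $\langle\omega,T\rangle=1$; such a $T$ exists because $\omega\neq0$. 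That $X_T$ is not a symmetry.

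For (c), Lemma~\ref{lem:parity}(b) gives $c_\omega=\lambda\delta_{\omega,s}$ for $\omega\neq0$. Since $|s|\ge2$, $s$ is not a unit vector, so all marginals vanish and (b) applies. By the criterion above, $X_T$ is a symmetry exactly when $\langle s,T\rangle=0$. Because $s\neq0$, this holds for exactly half of the $2^n$ strings $T$.

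For (d), I would compute the conjugated state directly from the diagonal. The state $S\sigma_0S^\dagger$ assigns to $y$ the probability $\pin(K^{-1}(y\oplus u))$. Using $\langle s,K^{-1}z\rangle=\langle s',z\rangle$, this gives
\[
S\sigma_0S^\dagger=\frac{I+\lambda(-1)^{\langle s',u\rangle}Z^{s'}}{D}.
\]
Therefore
\[
\mathrm{Tr}[(S\sigma_0S^\dagger-\sigma_0)Z_j]=\lambda\bigl[\pm\delta_{s',e_j}-\delta_{s,e_j}\bigr],
\]
and the second term is always zero because $|s|\ge2$. Since $s'\neq0$ (as $K$ is invertible), the statistic vanishes for all $j$ iff $|s'|\neq1$. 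Comparing Fourier coefficients, $S$ is a symmetry iff $s'=s$ and $\langle s',u\rangle=\langle s,u\rangle=0$.

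The counting step is where I expect the only real care to be needed. As $K$ ranges over $GL_n(\mathbb F_2)$, the map $K\mapsto(K^{-1})^{\mathsf T}s$ hits every nonzero vector equally often. This holds because $GL_n$ acts transitively on nonzero vectors, so the fibres are cosets of a common stabilizer. The fraction of affine maps passing the test is then $(2^n-1-n)/(2^n-1)$, independently of $u$. The fraction that are symmetries is $\frac{1}{2^n-1}\cdot\frac12$: the factor $1/(2^n-1)$ comes from requiring $s'=s$, and the factor $1/2$ from requiring $\langle s,u\rangle=0$. Taking the ratio gives $1/[2(2^n-1-n)]$.

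The remaining points to double-check are the sign convention in $(K^{-1})^{\mathsf T}$ and that the count is over uniformly drawn $(K,u)$, as in the construction.
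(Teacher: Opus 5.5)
Your proof is correct and follows the paper's argument part by part. Zero single-qubit marginals force a diagonal product state to be $I/2^n$; $X_T$ sends $Z_j\mapsto\pm Z_j$; and for affine $S$ one computes $S\sigma_0S^\dagger=[I+\lambda(-1)^{\langle s',u\rangle}Z^{s'}]/D$ and counts using the uniformity of $s'$ over the nonzero vectors. The only difference is in (b): you exhibit an explicit non-symmetric $X_T$ from a nonzero Fourier coefficient, where the paper argues that invariance under every $X_T$ would make $p$ constant, and your Fourier criterion then also gives (c) directly.
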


The proof is in Appendix~\ref{app:symtest}. We checked (d) by exhaustive enumeration at
$n=4$: $236\,544$ of the $322\,560$ affine permutations pass the test and $10\,752$, a
fraction $1/22$, are symmetries. Part (d) covers exactly the affine permutations of our
construction: the test fails because of the input, whatever the permutations.
The efficient degeneracy test of Ref.~\cite{Eastin2010} relies on a product input and
on access to the input distribution, whereas our input is correlated and its
distribution is hidden from the server. Theorem~\ref{thm:symtest} adds a further
obstruction: the single-qubit symmetry criterion of Ref.~\cite{Cable_2015} gives false
positives on this input even when its description is supplied.

\paragraph*{The local-basis finder.}
We next apply the LBF reviewed in Sec.~\ref{sec:concsim} directly to the
compiled gates of our construction~\cite{Cable_2015}. By
Proposition~\ref{prop:known}, once the true sectors are supplied the basis
can be recovered; the question here is what happens when they are not.
Run as published on our circuits, the LBF halts on every
one of the $48$ layers we scanned; with bare Bell-map blocks it halted on half of
them. With Haar-random blocks it halts on none, and on layers that are pure
permutations it halts on none either: there all $162$ candidates are valid and
mutually compatible. The rank-one candidates fall into a single class of local-unitary
invariants on every layer, so they carry no information about the partition. The
halting is a property of each layer. It is necessary for soundness but not
sufficient, since it does not by itself exclude methods that combine the readings of
overlapping layers; no such method appears in the relevant
literature~\cite{Eastin2010,Cable_2015}.

\paragraph*{Matrix-product simulation.}\label{sec:mps}
Its cost is governed by the Schmidt rank across the cuts of the
register~\cite{Vidal2003,Jozsa2003}. We write $\chi$ for the bond dimension, the number
of Schmidt values a truncated simulation keeps at each cut; it is unrelated to the
divergence $\chi^2$ of Sec.~\ref{sec:collisiontest}. The transmitted component saturates the maximal
rank $2^{n/2}$ on every $n=8$ instance (Sec.~\ref{sec:instances}). A truncated
simulation is not unitary and loses part of the input--output correlation.
Table~\ref{tab:sweep} shows both effects. Small bond dimensions are rejected quickly by
both tests. Already at a quarter of the maximal bond dimension ($\chi=8$ at $n=10$) the
truncated simulation retains the whole agreement correlation, so the agreement test is
blind, while the weighted collision test still rejects it, after about $1.3\times10^4$
rounds.
That is why the protocol uses both.

\paragraph*{Stabilizer simulation.}
Clifford circuits are simulated efficiently~\cite{Gottesman1998,Aaronson2004}, and
circuits with few non-Clifford gates at a cost exponential only in their
number~\cite{Bravyi2016}. Our circuits are not Clifford, and not at either level. The
permutations are affine and hence Clifford, but each block factor is a CCZ gate up to
Clifford gates (Sec.~\ref{sec:layers}), and every layer attaches three Haar-random
single-qubit frames. Even with the frames stripped off, a circuit of depth
$L=\Theta(n)$ therefore contains $\Theta(n)$ non-Clifford gates, and stabilizer-rank
methods scale exponentially in that number; stripping the frames would in any case
require recovering them, which is the LBF problem.

\subsection{Relation to earlier simulation results}

The simulations of Refs.~\cite{Eastin2010,Cable_2015} are given the input state. In that
setting our circuits are easy: a simulator holding $\rho_0$ knows $s$, and
Proposition~\ref{prop:trapdoor} then simulates the circuit gate by gate. The reason is
general. The client's record-keeping, Eq.~\eqref{eq:update}, consumes only the current
form and the permutation of the layer, and the permutation can be read from the gate
once the sectors are known; whoever holds the starting point can replay the client's
construction. The hardness here is therefore of a different kind from the one those
works leave open: it comes from withholding the input's degeneracy structure, not from
the dynamics alone. Our circuits are a concrete case in which the published LBF halts,
and they show that such halting is not by itself evidence of hardness when the input
is known and structured.

\subsection{Conjectures}

\begin{conjecture}[Concordant computation is not efficiently simulable in general]
\label{conj:general}
There exist uniformly generated families of concordance-preserving circuits on $n$
qubits, with inputs of polynomial description given to the simulator, for which no
classical algorithm running in time polynomial in $n$ samples the output of the
individual runs to within total variation $1/\mathrm{poly}(n)$.
\end{conjecture}

The conjecture cannot hold for all concordant computations, since the cases of
Refs.~\cite{Eastin2010,Cable_2015} are efficiently simulable; it takes the open side of
the question those works leave. By the previous subsection, the family of this paper
does not bear on it. The statement the protocol rests on is the following.

\begin{definition}[Hidden secret]\label{def:hidden}
A family of circuits of the form of Sec.~\ref{sec:construction} \emph{hides its
secret} if no classical algorithm running in time polynomial in $n$, given the public
data, outputs $s$, or an equivalent description of the class map, together with $A$,
except with probability negligible in $n$.
\end{definition}

\begin{conjecture}[Soundness of constructions that hide the secret]\label{conj:main}
For a family that hides its secret, used with uniform inputs, secret weights and
$M\neq U_L$, no classical algorithm running in time polynomial in $n$, given the
public data, passes both tests at the round counts of Eq.~\eqref{eq:rounds} except
with probability $\varepsilon(n)$ negligible in $n$.
\end{conjecture}

By Proposition~\ref{prop:trapdoor}, hiding the secret is necessary for soundness; the
conjecture is that it is also sufficient, so that the compiled circuit of such a family
is as good as the ideal model of Theorem~\ref{thm:ideal} against polynomial-time
classical servers. The recipe of Appendix~\ref{app:construction} is one way to build a
circuit of this form; other choices of blocks and permutations are possible, and the
client may use any of them. For the recipe given here, no known classical algorithm in
the relevant literature~\cite{Eastin2010,Cable_2015} recovers the secret from the
public data; we do not claim that it hides its secret in the sense of
Definition~\ref{def:hidden}. Whether a family that does exists, with a client that
still tracks the class map in polynomial time, is open.

% ==================================================================
\section{Instances and measurements}\label{sec:instances}
% ==================================================================

All instances are generated by the recipe of Appendix~\ref{app:construction} with
$\ell=3$, $L=4n$, a secret of weight five, the block factors $W$ and $W\mathrm{CZ}$, and
affine permutations.

\emph{The construction.} On three instances each at $n=6$ and $n=8$, ensemble
invariance (Theorem~\ref{thm:invariance}) holds to $7\times10^{-16}$ and branch
confinement (Corollary~\ref{cor:confine}) is exactly one. The transmitted component
saturates the maximal Schmidt rank $2^{n/2}$ on all three instances at $n=8$ and on two
of the three at $n=6$. The LBF, run as published, halts on all of the first $48$
layers; with Haar-random blocks, or on layers that are pure permutations, it halts on
none. On every layer the rank-one candidates fall into a single class of local-unitary
invariants. Supplied with the true sectors, the gate-by-gate reading of
Proposition~\ref{prop:known} reproduces the output distribution of an $n=7$ instance to
total variation $1\times10^{-14}$.

\emph{The protocol.} Table~\ref{tab:sweep} runs it against spoofers.

\begin{table}[t]
\caption{Estimated rounds needed to reject a spoofer, at $\lambda=1$,
$\theta=0.15$ and $\tau=5$, from Eq.~\eqref{eq:rounds}.
Rejection in $95\%$ of simulated runs took up to four times as
many rounds. The $n=8$ results cover two independently generated
circuit instances with different secret parities (seeds 0 and 1,
$A=0.895$ and $0.864$); numerical ranges give the smaller and
larger estimates across these instances. The $n=10$ results
refer to one instance (seed 0, $A=0.766$). ``Uniform'' returns
random strings, ``echo'' returns the input label, and
``$\chi$'' denotes an MPS simulation truncated to bond dimension
$\chi$ after every gate, with ties between equal Schmidt values
broken by the numerical decomposition. ``Blind'' is the
numerical label used when the estimated agreement rejection
count exceeds $10^6$ rounds or is infinite; it does not by itself
establish exact equality of the honest and spoofer means.
Thus ``$4\,537$; blind'' reports approximately $4\,537$ agreement
rounds for one eight-qubit instance and a count exceeding this
reporting threshold for the other. The collision test gives
finite rejection estimates for both.}
\label{tab:sweep}
\begin{ruledtabular}
\begin{tabular}{lcccc}
 & \multicolumn{2}{c}{$n=8$} & \multicolumn{2}{c}{$n=10$}\\
spoofer & agreement & collision & agreement & collision\\
\hline
uniform & 31--34 & 56--77 & 43 & 319\\
echo & 31--34 & 25--33 & 43 & 118\\
$\chi=4$ & 62--344 & 189--1\,158 & 129 & 905\\
$\chi=8$ & 4\,537; blind & 9\,195--12\,025 & blind & 13\,124\\
\end{tabular}
\end{ruledtabular}
\end{table}

Three features of Table~\ref{tab:sweep} matter. The trivial spoofers fail the agreement
test within a few tens of rounds and, because the target is far from flat at
$\lambda=1$, the collision test as well. Truncated simulations lose the agreement
correlation at small bond dimension but keep it from a quarter of the maximal bond
dimension on, so against them the collision test carries the rejection, at about
$10^4$ rounds at these sizes. With bare Bell-map blocks the agreement test held out
longer and the collision test needed fewer rounds, $2\,338$ instead of $13\,124$ at
$\chi=8$ and $n=10$; we accept that cost in exchange for an LBF that halts on every
layer. The entanglement of the components is
the same for both choices, $3.2$ of $4$ bits across the central cut at $n=8$ and
$4.25$ of $5$ at $n=10$. How the bond dimension needed to pass scales with $n$ is not
settled by sizes up to ten.

% ==================================================================
\section{Randomness}\label{sec:entropy}
% ==================================================================

The honest outcome of a round is spread over many strings. Its Born-rule min-entropy
given the input, $H_{\min}(x)=-\log_2\max_yp(y|x)$ in the basis $M$, averages $4.2$ bits
at $n=8$ and $5.8$ bits at $n=10$, with worst inputs at $2.8$ and $4.5$ bits: about
$n-4$ bits per round. No test can certify more than the honest state provides.

Neither test by itself certifies that an outcome was unpredictable: by
Proposition~\ref{prop:trapdoor}, a server holding the secret answers from a classical
rule and passes. The entropy must come from the conjecture, because a classical
predictor of the device's outcomes would itself be a classical algorithm that passes.
It therefore applies to constructions that hide their secret
(Definition~\ref{def:hidden}).

\begin{proposition}[Entropy from soundness]\label{prop:soundentropy}
Suppose a classical polynomial-time server passes with probability at most
$\varepsilon(n)$, as in Conjecture~\ref{conj:main}. Let a device pass with probability
$\Pr[\mathrm{acc}]$, returning $Y=(y_1,\dots,y_R)$, and let an adversary hold side
information $E$ computed by a classical polynomial-time algorithm from the public data,
the round labels and randomness it may share with the device. Then, against any
classical polynomial-time predictor,
\begin{equation}
H_{\min}(Y\,|\,E,\mathrm{acc})\;\ge\;\log_2\frac1{\varepsilon(n)}
-\log_2\frac1{\Pr[\mathrm{acc}]} .
\label{eq:soundentropy}
\end{equation}
\end{proposition}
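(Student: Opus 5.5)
The plan is a guessing-to-passing reduction, the standard route from soundness to min-entropy. We argue by contradiction. Suppose some classical polynomial-time predictor $\mathcal P$, given $E$, guesses $Y$ conditioned on acceptance with probability $p_{\rm guess}=2^{-H_{\min}(Y|E,\mathrm{acc})}$. From $\mathcal P$ we build a classical polynomial-time server that passes both tests with probability at least $\Pr[\mathrm{acc}]\,p_{\rm guess}$. Conjecture~\ref{conj:main}, in the form assumed in the statement, bounds this product by $\varepsilon(n)$. Taking logarithms gives $-\log_2 p_{\rm guess}\ge\log_2(1/\varepsilon(n))-\log_2(1/\Pr[\mathrm{acc}])$, which is Eq.~\eqref{eq:soundentropy}.

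\textbf{Step 1: constructing the server.} The simulated server receives the public data and the round labels $x_1,\dots,x_R$. It samples the randomness it would share with the device, runs the classical polynomial-time algorithm that produces $E$, and returns $\mathcal P(E)$ as its outcome string. Every ingredient is classical and polynomial-time, so the composite is a legitimate server in the sense of Conjecture~\ref{conj:main}.

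Two facts make the simulation faithful. First, the round labels are uniform and independent of the secret by Proposition~\ref{prop:weights}, so the server receives exactly the inputs it would receive in the protocol. Second, the client's decision is a deterministic function of the public data, the labels, the returned $Y$ and the client's private data; this holds for both statistics, Eqs.~\eqref{eq:agree} and \eqref{eq:Zw}.

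\textbf{Step 2: bounding its acceptance probability.} Write $\Pr[\mathcal P(E)=Y\wedge\mathrm{acc}]=\Pr[\mathrm{acc}]\,\Pr[\mathcal P(E)=Y\mid\mathrm{acc}]$. On the event $\{\mathcal P(E)=Y\}\cap\mathrm{acc}$, the simulated server returns exactly the transcript that the device returned and that was accepted. Because acceptance depends on the outcomes only through $Y$, the simulated transcript is accepted too. Hence the server's passing probability is at least $\Pr[\mathrm{acc}]\cdot p_{\rm guess}$. Here $p_{\rm guess}$ is the optimal guessing probability over classical polynomial-time predictors, which is what ``against any classical polynomial-time predictor'' means in the statement. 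Combining with Step 1 and the assumed bound $\varepsilon(n)$ gives the inequality.

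\textbf{Main obstacle: the coupling between the device and the simulated server.} The device's $Y$ and the server's guess $\mathcal P(E)$ must be compared on the same instance: the same secret, the same labels and the same client data. Only then does the event ``guess equals $Y$ and $\mathrm{acc}$'' transfer acceptance from the device to the server.

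I would handle this by placing the real device and the simulated server in a single probability space over the client's choices, the labels and the shared randomness. The client's data are then fixed identically for both, and the inequality is a pointwise event inclusion. The bound is then averaged over the client's sampling.

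One subtlety needs checking. $E$ is computed from the shared randomness, so the simulation must sample that randomness from its true marginal, independent of the client's secret. This is where I would use the hypothesis that $E$ is produced by a classical algorithm from the public data and the labels alone.

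The statement's min-entropy is computational: it is taken only over polynomial-time predictors. So no further step, such as a reduction from arbitrary predictors, is needed.
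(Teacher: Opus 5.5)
Your proposal is correct and follows the paper's proof: you build a classical server that draws fresh shared randomness, computes $E$ and submits the predictor's guess, and you observe that acceptance depends only on the submitted string and the client's inputs and weights. The joint law of these is the same with fresh or shared randomness, so the server passes with probability at least $k\Pr[\mathrm{acc}]$, which the hypothesis bounds by $\varepsilon(n)$. Your coupling in a single probability space, with the shared randomness independent of the client's secret, is exactly the step the paper states as ``the pair formed by $\hat Y$ and those inputs has the same distribution whether the randomness is fresh or shared.''
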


The proof is in Appendix~\ref{app:entropy}. The bound concerns the whole run and is set
by the soundness error; a per-round rate needs a further assumption, exactly as for
cross-entropy benchmarking~\cite{AaronsonHung2023,AaronsonGunn2020}.

\begin{conjecture}[Long-list hardness]\label{conj:longlist}
For instances of a construction that hides its secret, no classical algorithm running
within time $T_{\max}$, given the public data and an input $x$,
outputs a list of $J\le2^\eta$ labels with $\frac1J\sum_iD\,p(y_i|x)\ge1+\delta$, except
with negligible probability.
\end{conjecture}

Under Conjecture~\ref{conj:longlist}, each accepted round carries at least
$\min\{\eta,H_{\min}(x)\}-\log_2(1/\Pr[\mathrm{acc}])$ bits. Turning per-round statements
into a guarantee on the total entropy of $R$ rounds, against an adversary whose strategy
may depend on the whole transcript, is what entropy accumulation provides in the
device-independent setting~\cite{Dupuis2020,Metger2022}; we have not carried that
analysis out.

\paragraph*{Scope.} All statements above concern classical adversaries. A server that
operates the device sees its own input--outcome pairs, and each pair satisfies
$c(x_r)=c_L(y_r)$ with probability $(1+A)/2$. With $A$ of order one and many more than $n$ rounds, these noisy linear
equations in the $2n$ unknown bits of $(s,\sL)$ can be solved, after which such a server
could switch to the algorithm of Proposition~\ref{prop:trapdoor} with outcomes it
predicts. Certifying randomness against that adversary requires a fresh secret for
every round, which the collision test, as it stands, does not allow
(Sec.~\ref{sec:discussion}).

% ==================================================================
\section{Discussion}\label{sec:discussion}
% ==================================================================

The protocol rests on one asymmetry and one secret. The asymmetry is that block
factors commute with a degenerate concordant state, so they cancel from the ensemble
and act on every run. The secret is a parity: given it, the sectors of every layer
follow by one inner product, the client builds the circuit and predicts the ensemble
cheaply, and, by Proposition~\ref{prop:trapdoor}, so would anyone else. Everything in
the protocol serves to keep that one secret where it belongs. The inputs are uniform
and the eigenvalues are applied in the client's bookkeeping, so the labels carry
nothing; the block factors and permutations are chosen so that no single gate
displays the sectors; and the only channel left to a classical server is the compiled
circuit, which is the content of Conjecture~\ref{conj:main}.

That conjecture plays the role that the hardness of secret extraction plays for the
IQP schemes, and the history of those schemes is the reason to treat it with
care~\cite{ShepherdBremner2009,KahanamokuMeyer2019,YungCheng2020,BremnerChengJi2025,
GrossHangleiter2025}. There the secret is a vector hidden in the published program, a
binary matrix, and the attacks recover it from the algebraic structure of that
matrix. Here no binary matrix carrying the secret is published: the published object
is a sequence of dense $8\times8$ unitaries in random local frames, and what an
attacker must recover is which subspaces each gate preserves. 

% The LBF halts on every
% layer we scanned when it is not told, and no known classical algorithm in the relevant
% literature~\cite{Eastin2010,Cable_2015} recovers the secret of the explicit
% construction. The sharpest open question is whether there is a construction whose
% public data carry no signature of the secret while the client still tracks the class
% map in polynomial time. Any such proposal should be tested against methods that
% combine the readings of overlapping layers, and publishing challenge instances, as was
% done, and answered, for the first IQP
% scheme~\cite{ShepherdBremner2009,KahanamokuMeyer2019}, remains the natural way to test
% it.

The block factors and permutations of the explicit construction were chosen against
the simulation methods known in the literature~\cite{Eastin2010,Cable_2015}, which
read the circuit one gate at a time: the single-qubit symmetry test fails on the input
(Theorem~\ref{thm:symtest}), and the LBF halts on every layer we scanned when it is
not told the sectors (Table~\ref{tab:design}). No known classical algorithm in the
relevant literature recovers the secret of the explicit construction. Ambiguity on
each layer is, however, necessary rather than sufficient. A classical method could
carry every reading of each layer forward and discard those that later layers
contradict, so that the ambiguity narrows with depth; we have not found such a method
in the literature, and we do not claim that the explicit construction withstands one.
Because the circuit is entirely the client's design, a construction can be adapted to
such a method, as the block factors here were adapted to the LBF, for instance with
layers that restore ambiguity after consistency across earlier layers has narrowed it.
What such a construction must achieve is ambiguity that later layers cannot remove,
so that several readings with different class maps remain consistent with the whole
circuit, while the client still tracks the class map in polynomial time. No result we
are aware of rules this out, and whether it can be achieved is the sharpest open
question raised by this work. Publishing challenge instances, as was done, and
answered, for the first IQP scheme~\cite{ShepherdBremner2009,KahanamokuMeyer2019},
remains the natural way to test any such proposal.

The protocol moves the cost of verification. In certification by random circuit
sampling the client's classical work grows exponentially with the number of
qubits~\cite{AaronsonHung2023}; here it is $O(n)$ per round, so the classical side of
verification is no longer the bottleneck. What remains is the number of rounds. It is set
by the collision test, which, like any test of an output distribution that uses no
further structure, needs a number of samples growing as $\sqrt D$~\cite{Canonne2020}, and
by the fidelity of the device, since noise dilutes both statistics.

Reducing that cost is beyond the scope of this paper, but the construction points to
several directions. The agreement test checks a description of the output of polynomial
size and needs a number of rounds that does not grow with $n$. At the sizes studied here
it is reproduced by truncated simulations of large bond dimension, which is why the
collision test is used; whether it remains sound on its own against efficient
simulation as $n$ grows, so that the collision test could be dropped, is the most direct
question. Richer statistics of the same kind, from a hidden state with several secret
parities or from repeated inputs, would give a spoofer more to reproduce at little cost
in rounds. A test that needs few rounds would also allow a fresh secret in every round,
which closes the device-output channel of Sec.~\ref{sec:entropy}. Independently,
shallower circuits raise the fidelity and so lower the number of rounds; the depth used
here was chosen conservatively, and the smallest depth that keeps the properties of
Sec.~\ref{sec:soundness} is an open question about the construction.

The construction also admits an extension from a single secret parity to an
array of independent secret vectors $s_1,\ldots,s_k\in\mathbb F_2^n$.
Their joint parity values define $2^k$ classes, to which distinct weights
may be assigned to obtain more than two degenerate spectral sectors.
Affine layers update the secret array componentwise, while blocks acting
within the joint classes continue to cancel from the concordant ensemble.
Uniform inputs and private class weights therefore retain the same
verification principle, with the target distribution and agreement
statistics determined by the secret parities and their combinations.
For fixed $k$, or for $k=O(\log n)$ with polynomial-size gate matrices,
the sector records and verification tables remain polynomial in size;
gate supports must also retain nontrivial blocks within the sectors.
Increasing the number of sectors therefore increases the bookkeeping but
preserves polynomial classical cost for circuit construction and per-round
verification. The required number of rounds and the effect on spoofing
resistance are separate questions for future investigation.

Three further questions are open. Whether there are degeneracy structures other than
parities that a client can track locally without disclosing them is not known; the
parity is the simplest, and the attacks above use its linearity only through
Proposition~\ref{prop:trapdoor}. Whether concordant computation with a known input is
hard, Conjecture~\ref{conj:general}, is a separate question, on which this family does
not bear. And the effect of hardware noise, in particular coherent errors in the local
frames, on both statistics has not been analyzed; testing the instances on hardware is
the next step.

\begin{acknowledgments}
This work was supported by QpiAI India Private Limited. The first author sincerely thanks Naga Dileep Varikuthi of the University of Trento, Italy, for his valuable comments and suggestions on the manuscript. The first author gratefully acknowledges Vaibhav Madhok and Sourav Manna of the Indian Institute of Technology Madras for their valuable insights. The authors also acknowledge insightful discussions with Sachin Vaidya and Hari Hara Suthan Chittoor. 
\end{acknowledgments}

% ==================================================================
\section*{Disclosure of AI tool use}

Claude Opus 5 (Anthropic) was used for assistance with writing, organization, code
generation, and formalization of mathematical arguments. All scientific ideas, methods,
theorems, and proof strategies were developed by the author. All AI-assisted material
was independently verified by the author, who takes full responsibility for the content.

\bibliography{bibli2}
\bibliographystyle{apsrev4-2}

\onecolumngrid
\appendix

% ==================================================================
\section{Notation}\label{app:notation}
% ==================================================================

The symbols of the main text are listed in the order in which they appear, with the
section that introduces each.

\begin{center}
\renewcommand{\arraystretch}{1.2}
\begin{tabular}{p{0.25\textwidth}p{0.6\textwidth}l}
\hline\hline
symbol & meaning & Sec.\\
\hline
$n$; $D=2^n$ & number of qubits; number of computational-basis states &
\ref{sec:background}\\
$x$, $y$; $\oplus$ & bit-string labels of basis states; bitwise addition modulo two &
\ref{sec:background}\\
$\langle s,x\rangle$; $Z^s$ & parity of the bits of $x$ selected by $s$; product of
Pauli $Z$ on those qubits, with eigenvalue $(-1)^{\langle s,x\rangle}$ on $\ket x$ &
\ref{sec:background}\\
$L$, $t$; $G_t$ & number of layers, layer index; gate of layer $t$ &
\ref{sec:sectors}\\
$b_t$, $\ell$, $d_b=2^\ell$; $a$ & support of layer $t$, its number of qubits and of
labels; the remaining qubits & \ref{sec:sectors}\\
$\vec x$; $v$ & part of a label on the support; part on the rest & \ref{sec:sectors}\\
$\rho_0$, $\pin$; $\Lset(x)$ & input state and its distribution over labels; labels
with the same input probability as $x$ & \ref{sec:sectors}\\
$\rho_t$; $U_t$; $\sigma_t$ & state after $t$ layers; its frame, a product of
single-qubit unitaries; the state in its own frame, Eq.~\eqref{eq:sigmat} &
\ref{sec:sectors}\\
$\tilde\rho^{(a)}_{\vec x}$; $\mathcal S_i$, $\Pi^{(b)}_i$ & conditional state of the
rest given support label $\vec x$; sectors and their projectors,
Eq.~\eqref{eq:sectors} & \ref{sec:sectors}\\
$P_t$; $B_t$ & permutation of support labels; block factor, Eq.~\eqref{eq:gate} &
\ref{sec:cancellation}\\
$C$; $P$ & circuit $G_L\cdots G_1$; composite permutation $P_L\cdots P_1$ &
\ref{sec:cancellation}\\
$s$; $\lambda$ & secret parity, $|s|\ge3$; bias, $0<\lambda\le1$ &
\ref{sec:hiddenstate}\\
$c(x)$; $w(x)$ & class of an input label; its weight $1+\lambda(-1)^{c(x)}$,
Eq.~\eqref{eq:weight} & \ref{sec:hiddenstate}\\
$s|_b$; $\ker s|_b$ & bits of $s$ on a support; support labels of even parity under
them & \ref{sec:parity}\\
$s^{(t)}$, $\beta_t$; $c_t(y)$ & form and flip after $t$ layers; class of a label after
$t$ layers, Eq.~\eqref{eq:classt} & \ref{sec:layers}\\
$\sL$, $\beta$; $c_L(y)$ & final form and flip; class of an outcome &
\ref{sec:layers}\\
$W_B$; $W$ & Bell map; its dense version, applied on a sector & \ref{sec:layers}\\
$K_t$, $u_t$ & matrix and offset of the affine permutation $P_t$, Eq.~\eqref{eq:update}
& \ref{sec:layers}\\
$R$, $r$; $x_r$, $y_r$, $w_r$ & number of rounds, round index; input, outcome and weight
of round $r$ & \ref{sec:protocol}\\
$p(y|x)$ & outcome distribution of the honest device for input $x$ &
\ref{sec:protocol}\\
$M$; $\Omega_q$; $\theta$ & public measurement basis; tilt on qubit $q$; tilt angle &
\ref{sec:target}\\
$a_q$; $A$ & part of the $Z$ value of qubit $q$ that survives the tilt; visibility,
Eq.~\eqref{eq:visibility} & \ref{sec:target}\\
$\plam(y)$ & outcome distribution of the hidden state, Eq.~\eqref{eq:qscore} &
\ref{sec:target}\\
$g_r$; $\hat S$, $\bar S$; $\tau$ & agreement sign; agreement statistic and its expected
value; rejection threshold in standard deviations & \ref{sec:agreetest}\\
$S_1(y)$, $S_2(y)$; $\mathcal W$; $Z_w$ & sums of weights and of squared weights over
the rounds that returned $y$; total weight; weighted collision statistic,
Eq.~\eqref{eq:Zw} & \ref{sec:collisiontest}\\
$\tilde p$; $\chi^2(p\|q)$; $\sigma_Z$ & outcome distribution a server's strategy would
produce on the hidden state; chi-square divergence; honest standard deviation of $Z_w$
& \ref{sec:collisiontest}\\
$R_{\rm agr}$, $R_{\rm col}$ & rounds needed to reject a spoofer, Eq.~\eqref{eq:rounds}
& \ref{sec:collisiontest}\\
$\kappa$ & largest bias of a nonzero linear function of $(s,\sL)$ & \ref{sec:ideal}\\
$\chi$ & bond dimension of a truncated matrix-product simulation & \ref{sec:mps}\\
$H_{\min}$; $\varepsilon(n)$ & min-entropy; soundness error of
Conjecture~\ref{conj:main} & \ref{sec:entropy}\\
\hline\hline
\end{tabular}
\end{center}

% ==================================================================
\section{Proofs of Proposition~\ref{prop:factor}, Theorem~\ref{thm:invariance},
Corollary~\ref{cor:confine} and Lemma~\ref{lem:parity}}\label{app:factor}
% ==================================================================

\begin{proof}[Proof of Proposition~\ref{prop:factor}]
Since $G_t$ acts only on $b$ and the input and output frames agree
outside $b$, write
\begin{equation*}
U_t^\dagger G_tU_{t-1}=I_a\otimes\Gamma.
\end{equation*}
Let $\Lambda_v$ be the diagonal conditional operator on $b$ for
an outside label $v$. Concordance of the output implies that
$\Gamma\Lambda_v\Gamma^\dagger$ is diagonal for every $v$.

A unitary taking these diagonal operators to diagonal operators
maps their common eigenspaces, which are the degenerate sectors,
onto subspaces spanned by computational-basis vectors of the same
dimensions. A permutation $P_t$ of the support labels can therefore
match every input sector with its image. Removing this permutation,
\begin{equation*}
B_t=P_t^\dagger\Gamma,
\end{equation*}
leaves a unitary acting within each sector. Thus $B_t$ is block
diagonal on the sectors and commutes with every $\Lambda_v$, hence
with $\sigma_{t-1}$ when extended by the identity outside $b$.
Since $\Gamma=P_tB_t$, we obtain
\begin{equation*}
G_t=U_tP_tB_tU_{t-1}^\dagger.
\end{equation*}
\end{proof}

\emph{Theorem~\ref{thm:invariance}.} By Proposition~\ref{prop:factor},
$U_t^\dagger G_tU_{t-1}=P_tB_t$, so $\sigma_t=P_tB_t\sigma_{t-1}B_t^\dagger
P_t^\dagger=P_t\sigma_{t-1}P_t^\dagger$, since $[B_t,\sigma_{t-1}]=0$. Iterating from
$\sigma_0=\sum_x\pin(x)\ket x\!\bra x$ gives $\sigma_L=\sum_x\pin(x)\ket{P(x)}\!\bra{P(x)}$.
\hfill$\square$

\emph{Corollary~\ref{cor:confine}.} The frames telescope,
$U_t^\dagger G_t\cdots G_1=P_tB_t\cdots P_1B_1$. Since $B_1$ is block diagonal on the
sectors of $\sigma_0$ on $b_1$ and acts as the identity elsewhere, it maps $\ket x$ into
the span of labels that agree with $x$ off $b_1$ and whose support part lies in the same
sector; such labels have the same probability, so $B_1\ket x$ lies in the span of
$\Lset(x)$. The permutation $P_1$ carries this level set onto the level set
$P_1(\Lset(x))$ of $\sigma_1$, by Theorem~\ref{thm:invariance}. Induction on $t$ gives the
claim. \hfill$\square$

\emph{Lemma~\ref{lem:parity}.} (a) Split a label into its support part $\vec x$ and its
complement $v$. Then $\pin(\vec x,v)=[1+\lambda(-1)^{\langle s|_b,\vec x\rangle}
(-1)^{\langle s|_a,v\rangle}]/2^n$, so the conditional state depends on $\vec x$ only
through the bit $\langle s|_b,\vec x\rangle$, and the two values of that bit give
different conditional states whenever $\lambda\neq0$. The sectors are the two level sets
of $\vec x\mapsto\langle s|_b,\vec x\rangle$, the cosets of its kernel. (b)
$\mathbb E[(-1)^{\langle \omega,x\rangle}]=2^{-n}\sum_x(-1)^{\langle \omega,x\rangle}+\lambda
2^{-n}\sum_x(-1)^{\langle \omega\oplus s,x\rangle}=\delta_{\omega,0}+\lambda\delta_{\omega,s}$. The marginal
on a set $Q$ of qubits is fixed by the coefficients with $\omega$ supported in $Q$; if $Q$
does not contain the support of $s$, all of them vanish for $\omega\neq0$. \hfill$\square$

% ==================================================================
\section{The instance family}\label{app:construction}
% ==================================================================

\begin{enumerate}[leftmargin=*,itemsep=2pt]
\item Draw $s$ uniformly among the vectors of the chosen weight ($|s|\ge3$); set
$s^{(0)}=s$, $\beta_0=0$. The input is diagonal in the computational basis, $U_0=I$.
\item For $t=1,\dots,L$:
\begin{enumerate}[leftmargin=*,itemsep=1pt]
\item draw a support $b_t$ of $\ell=3$ qubits straddling the central cut, with
$s^{(t-1)}|_{b_t}\neq0$;
\item form the two sectors, the cosets of $\ker s^{(t-1)}|_{b_t}$, and order the labels
of each increasingly, so that each sector is read as a two-qubit space;
\item take the Bell map $W_B\ket{jk}=(Z^j\otimes X^k)\ket{\Phi^+}$, set
$W=(I\otimes H)W_B$, and set $B_t=W$ on the sector containing
$\vec 0$ and $W\cdot\mathrm{CZ}$ on the other; the resulting $B_t$ is not Clifford;
\item draw an invertible affine map $\vec x\mapsto K_t\vec x\oplus u_t$ on $\mathbb F_2^3$
uniformly; let $P_t$ be the corresponding permutation, and update the form and flip by
Eq.~\eqref{eq:update};
\item draw Haar-random single-qubit unitaries on $b_t$ to form $U_t$ (equal to $U_{t-1}$
elsewhere) and publish $G_t=U_tP_tB_tU_{t-1}^\dagger$ with $b_t$.
\end{enumerate}
\item Choose $M_q=U_{L,q}\,e^{-i\theta\,\hat n_q\cdot\vec\sigma}$ with random axes,
publish $M$, and keep $U_L$, $\sL$, $\beta$ and $A$.
\end{enumerate}
Optionally, run the LBF on each candidate layer and redraw it if the LBF reads it
without ambiguity (Remark~\ref{rem:reject}).

% ==================================================================
\section{Proof of Theorem~\ref{thm:symtest}}\label{app:symtest}
% ==================================================================

(a) A product state $\bigotimes_j\sigma_j$ with $\mathrm{Tr}[\sigma_jZ]=0$ for every $j$
has every diagonal factor equal to $I/2$, hence is $I/2^n$.

(b) $X_T$ maps $x\mapsto x\oplus T$, an affine permutation and so a Clifford unitary. It
maps $Z_j$ to $\pm Z_j$, so $\mathrm{Tr}[X_T\sigma X_TZ_j]=\pm\mathrm{Tr}[\sigma Z_j]=0$
and the statistic vanishes for every $T$. If every $X_T$ were a symmetry, $p(x\oplus T)=
p(x)$ for all $x,T$, so $p$ would be constant and $\sigma=I/2^n$.

(c) $X_T\sigma_0X_T=[I+\lambda(-1)^{\langle s,T\rangle}Z^s]/2^n$, which equals $\sigma_0$
if and only if $\langle s,T\rangle=0$; since $s\neq0$ this holds for half of the $T$.

(d) For $S:x\mapsto Kx\oplus u$, $S\sigma_0S^\dagger=[I+\lambda(-1)^{\langle s',u\rangle}
Z^{s'}]/2^n$ with $s'=(K^{-1})^{\mathsf T}s$. Its single-qubit $Z$ expectations are
nonzero only if $|s'|=1$, and those of $\sigma_0$ vanish since $|s|\ge2$; so the statistic
vanishes if and only if $|s'|\neq1$. $S$ is a symmetry if and only if $s'=s$ and
$\langle s,u\rangle=0$. As $K$ ranges over $GL(n,\mathbb F_2)$, $s'$ is uniform over the
$2^n-1$ nonzero vectors, $n$ of which have weight one, and half of the offsets $u$ are
orthogonal to $s$; the fraction of test-passing permutations that are symmetries is
$\tfrac12/(2^n-1-n)$. \hfill$\square$

% ==================================================================
\section{Proof of Proposition~\ref{prop:agree}}\label{app:agree}
% ==================================================================

Write $\ket{\varphi_x}=U_L^\dagger C\ket x$. By Corollary~\ref{cor:confine},
$(-1)^\beta Z^{\sL}\ket{\varphi_x}=(-1)^{c(x)}\ket{\varphi_x}$. The outcome is $y$ with
probability $p(y|x)=|\bra y\Omega^\dagger\ket{\varphi_x}|^2$, where $\Omega=\bigotimes_q\Omega_q$. Rounds
are independent, so the $g_r$ are independent $\pm1$ variables, and for uniform $x$
\begin{equation*}
\mathbb E[g]=\frac{(-1)^\beta}{2^n}\,\mathrm{Tr}\bigl[\Delta\,\Omega Z^{\sL}\Omega^\dagger\bigr],
\end{equation*}
with $\Delta=\sum_x(-1)^{c(x)}\ket{\varphi_x}\!\bra{\varphi_x}$.
The map $x\mapsto\ket{\varphi_x}$ restricted to the input level set of class $c$ is a
unitary onto the output level set of class $c$, so $\Delta=(-1)^\beta Z^{\sL}$ and
$\mathbb E[g]=2^{-n}\mathrm{Tr}[Z^{\sL}\Omega Z^{\sL}\Omega^\dagger]$. This factorizes over qubits;
qubits off the support of $\sL$ contribute $1$, and each qubit on it contributes
$\frac12\mathrm{Tr}[(\Omega_q^\dagger Z\Omega_q)Z]=a_q$, because $\Omega_q^\dagger Z\Omega_q$ is traceless
with diagonal $(a_q,-a_q)$. The mean is $A$, and $\lambda$ does not enter. The same
computation gives, for later use,
\begin{equation}
\sum_x\frac{p(y|x)}{2^{n}}=\frac1D,\quad
\sum_x\frac{(-1)^{c(x)}p(y|x)}{2^{n}}=\frac{A\,\xi_y}{D},
\label{eq:marginals}
\end{equation}
with $\xi_y=(-1)^{c_L(y)}$.
If the outcome is independent of the input, the mean factorizes and contains
$\mathbb E[(-1)^{c(x)}]=0$ for uniform $x$. \hfill$\square$

% ==================================================================
\section{Proof of Proposition~\ref{prop:collision}}\label{app:collision}
% ==================================================================

Rounds are independent and identically distributed: $x_r$ uniform, $y_r$ drawn from the
server's response, $w_r=w(x_r)$. Let $h(r,r')=w_rw_{r'}\,\mathbf 1[y_r=y_{r'}]/\plam(y_r)$.
Then $\sum_y[S_1(y)^2-S_2(y)]/\plam(y)=U=\sum_{r\neq r'}h(r,r')$, summed over ordered
pairs, and $Z_w=U/(R-1)-R-2(\mathcal W-R)$.

(a) With $\tilde p(y)=\mathbb E[w_r\mathbf 1[y_r=y]]$, independence gives
$\mathbb E[h]=\sum_y\tilde p(y)^2/\plam(y)$, and $\sum_y\tilde p(y)=\mathbb E[w]=1$. So
$\mathbb E[U]/(R-1)=R\,[\chi^2(\tilde p\|\plam)+1]$, while $\mathbb E[\mathcal W]=R$ for every
server because the client draws $x_r$ uniformly. Hence $\mathbb E[Z_w]=R\,\chi^2(\tilde p
\|\plam)$.

(b) For the honest device $\tilde p=\plam$ by Proposition~\ref{prop:weights} and
Eq.~\eqref{eq:qscore}, so $h_0=\mathbb E[h]=1$ and
$h_1(r)=\mathbb E[h\,|\,r]-h_0=w_r-1$. In the Hoeffding decomposition
$h=h_0+h_1(r)+h_1(r')+h_2(r,r')$, with $\mathbb E[h_2|r]=0$,
\begin{equation*}
U=R(R-1)+2(R-1)(\mathcal W-R)+\sum_{r\neq r'}h_2(r,r'),
\end{equation*}
so $Z_w=\sum_{r\neq r'}h_2(r,r')/(R-1)$, a degenerate $U$-statistic. Its terms are
uncorrelated unless they involve the same unordered pair, so
$\mathrm{Var}(Z_w)=2R\,\mathbb E[h_2^2]/(R-1)$ with
$\mathbb E[h_2^2]=\mathbb E[h^2]-h_0^2-2\mathbb E[h_1^2]=\mathbb E[h^2]-1-2\lambda^2$,
since $\mathrm{Var}(w)=\lambda^2$. Finally $\mathbb E[h^2]=\sum_ym_2(y)^2/\plam(y)^2$ with
$m_2(y)=\mathbb E[w_r^2\mathbf 1[y_r=y]]=\sum_x2^{-n}w(x)^2p(y|x)$. Using
$w^2=1+\lambda^2+2\lambda(-1)^{c(x)}$ and Eq.~\eqref{eq:marginals},
$m_2(y)=[1+\lambda^2+2\lambda A\xi_y]/D$ and $\plam(y)=[1+\lambda A\xi_y]/D$, with
$\xi_y=\pm1$ on half of the strings each. Summing gives Eq.~\eqref{eq:sigma}. \hfill$\square$

% ==================================================================
\section{Proof of Theorem~\ref{thm:poly}}\label{app:poly}
% ==================================================================

(i) The client keeps $s$, $\sL$, $\beta$, $A$, the frames, and the $L$ tables of $d_b$
entries. (ii) On layer $t$ it computes the restricted form and its cosets, $O(\ell\,d_b)$;
draws an affine map, $O(\ell^2)$; updates the form,
$O(n\ell)$ at most; and forms the block factor and the compiled gate, $O(d_b^{\,3})$.
(iii) By Lemma~\ref{lem:parity}(a) the cosets are the exact sector partition, so
$[B_t,\sigma_{t-1}]=0$ holds exactly. (iv) The weight and the agreement sign need one
parity each; $\plam(y)$ needs one parity and $A$. (v) The frames are local, so
$U_t^\dagger G_tU_{t-1}=P_tB_t$ does not involve them; a product of single-qubit
unitaries acts separately on the two sides of any bipartition, so it preserves Schmidt
coefficients and commutes with truncation. \hfill$\square$

% ==================================================================
\section{Proof of Theorem~\ref{thm:ideal}}\label{app:ideal}
% ==================================================================

Write the composite affine permutation as $P(x)=Kx\oplus u$. Class preservation,
$c_L(P(x))=c(x)$ for all $x$, gives $K^{\mathsf T}\sL=s$ and $\beta=\langle\sL,u\rangle$.
Define
\begin{equation}
\kappa=\max_{(e,e')\neq(0,0)}\Bigl|\mathbb E\bigl[(-1)^{\langle s,e\rangle\oplus
\langle\sL,e'\rangle}\bigr]\Bigr| ,
\label{eq:kappa}
\end{equation}
the expectation taken over the client's sampling. For $r\neq r'$,
\begin{equation*}
g_rg_{r'}=(-1)^{\langle s,\,x_r\oplus x_{r'}\rangle\oplus\langle\sL,\,y_r\oplus y_{r'}
\rangle},
\end{equation*}
in which $\beta$ has cancelled. The labels are drawn by the client and the outcomes are
computed from the server's view and private randomness, both independent of
$(s,K,u,U_L)$. Conditioning on them fixes $(e,e')=(x_r\oplus x_{r'},y_r\oplus y_{r'})$.
If $(e,e')\neq(0,0)$, $|\mathbb E[g_rg_{r'}]|\le\kappa$ by Eq.~\eqref{eq:kappa}. If
$(e,e')=(0,0)$ the product is $1$, which requires $x_r=x_{r'}$, an event of probability
$2^{-n}$. Hence
\begin{equation*}
\mathbb E[\hat S^2]=\frac1{R^2}\Bigl[R+\sum_{r\neq r'}\mathbb E[g_rg_{r'}]\Bigr]
\le\frac1R+2^{-n}+\kappa .
\end{equation*}
Acceptance requires $\hat S\ge A-\delta_R>0$, hence $\hat S^2\ge(A-\delta_R)^2$, and
Markov's inequality gives the bound. For the estimate of $\kappa$: if $e'=0$ and $e\neq0$,
$|\mathbb E_s(-1)^{\langle s,e\rangle}|\le\nu_n$, because the sum of
$(-1)^{\langle s,e\rangle}$ over all $s$ vanishes and the excluded vectors of weight at
most two number $1+n+n(n-1)/2$. If $e'\neq0$, $\langle\sL,e'\rangle=\langle s,K^{-1}e'
\rangle$, so the exponent is $\langle s,e\oplus K^{-1}e'\rangle$, whose bias is at most
$\Pr[K^{-1}e'=e]+\nu_n\le\mu+\nu_n$ when $K$ is independent of $s$, where
$\mu=\max_{e'\neq0,e}\Pr[K^{-1}e'=e]$, $\nu_n=(1+n+n(n-1)/2)/N_s$, and $N_s$ is the
number of vectors of weight at least three. For a well-mixing composite permutation
$\mu$ is of order $2^{-n}$. \hfill$\square$

% ==================================================================
\section{Proof of Proposition~\ref{prop:trapdoor}}\label{app:trapdoor}
% ==================================================================

\emph{Reading the circuit.} The sectors on $b_1$ follow from $s$ by
Lemma~\ref{lem:parity}(a), and the input frame is the computational basis. Given the sectors on $b_t$ and the frame
$U_{t-1}$, Proposition~\ref{prop:known} gives a frame $U_t$ and the image of each sector
under the gate, in time $O(d_b^{\,3})$. The image sets are the cosets of the new
restricted form, and the image of the class-$0$ sector carries class $0$; this fixes
$s^{(t)}|_{b_t}$ and $\beta_t$ uniquely, whatever relabeling inside a sector the reading
contains. Induction gives $\sL$, $\beta$ and $U_L$. On qubits off the support of $\sL$
the frame does not enter $A$; on qubits in it, the frame is fixed up to a phase and an
exchange of basis vectors, which changes the sign of $a_q$ and of the corresponding
class bit together, so $A(-1)^\beta$ is recovered.

\emph{Answering.} Given $x$, the algorithm sets $c'=c(x)$ with probability $(1+A)/2$ and
$c'=1-c(x)$ otherwise, and returns $y$ uniform among the labels with
$c_L(y)=c'$. Then $g_r=(-1)^{c(x_r)+c'}$ are independent $\pm1$
variables of mean $A$, exactly as for the honest device (Proposition~\ref{prop:agree}).
The answer distribution is $p'(y|x)=[1+A(-1)^{c(x)+c_L(y)}]/D$,
so $\tilde p(y)=\sum_x2^{-n}w(x)p'(y|x)=\plam(y)$ and
$m_2(y)=\sum_x2^{-n}w(x)^2p'(y|x)=[1+\lambda^2+2\lambda A\xi_y]/D$, as for the honest
device. The proof of Proposition~\ref{prop:collision}(b) uses only $\tilde p=\plam$, the
distribution of the weights and $m_2$, so $Z_w$ has the honest mean and variance.
\hfill$\square$

% ==================================================================
\section{Proof of Proposition~\ref{prop:soundentropy}}\label{app:entropy}
% ==================================================================

Let $\hat Y$ be the adversary's guess and $k$ the probability that $\hat Y=Y$ conditioned
on acceptance, so that $\Pr[\hat Y=Y\wedge\mathrm{acc}]=k\Pr[\mathrm{acc}]$. Consider the
classical algorithm that draws fresh shared randomness, computes $E$ and then $\hat Y$
as the adversary would, and submits $\hat Y$. Whether a submission passes depends only on
the submitted labels and on the client's inputs and weights, and the pair formed by
$\hat Y$ and those inputs has the same distribution whether the randomness is fresh or
shared. The algorithm therefore passes with probability at least $k\Pr[\mathrm{acc}]$,
which is at most $\varepsilon(n)$ by hypothesis. Taking the best predictor,
$H_{\min}(Y|E,\mathrm{acc})=-\log_2\max k$, and the bound follows. \hfill$\square$

% ==================================================================
\section{The local-basis finder diagnostic}\label{app:lbf}
% ==================================================================

For a layer with support $b$ and input frame $U_{t-1}$, the diagnostic follows
Ref.~\cite{Cable_2015}. A candidate is a set $X$ of support labels with $1\le|X|\le
d_b/2$, $162$ candidates at $\ell=3$. For each candidate it forms
$O_X=G_tU_{t-1}\Pi_XU_{t-1}^\dagger G_t^\dagger$ and, for every qubit $q$ of the support,
solves the linear equations $[O_X,\pi_q\otimes I]=0$ for a $2\times2$ operator $\pi_q$. The
candidate is valid if every qubit either has a two-dimensional solution space, which
fixes its basis, or is unconstrained, and if the resulting product basis diagonalizes
$O_X$. Two valid candidates are incompatible if they fix different bases on some qubit,
and the layer halts the LBF if any incompatible pair exists. A rank-one candidate
$X=\{\vec x\}$ is valid exactly when the column $G_tU_{t-1}\ket{\vec x}$ is a product state; the
rank-one count of Sec.~\ref{sec:instances} is the number of distinct tuples of
single-qubit purities among the $d_b$ columns. All measurements were made with the
scripts accompanying this paper.

\end{document}